\pgfplotsset{compat=newest}
\definecolor{DTgreen}{rgb}{.1, .5, .2}
\definecolor{HS}{rgb}{0.0,0.5,0.0}
\definecolor{SN}{rgb}{1, 0.5, 0.31}
\definecolor{SSN}{rgb}{0.87, 1.0, 0.0}
\definecolor{OSN}{rgb}{1.0, 0.44, 0.37}
\newcommand{\re}[1]{\Re[#1]}
\newcommand{\im}[1]{\Im[#1]}
\providecommand{\R}{\mathbb{R}}
\providecommand{\C}{\mathbb{C}}
\providecommand{\E}{\mathbb{E}}
\providecommand{\abs}[1]{\left\lvert#1\right\rvert}
\providecommand{\Norm}[1]{\left \lVert #1 \right\rVert}
\newcommand{\ud}{\,\mathrm{d}}
\renewcommand{\d}{\mathrm{d}}
\newcommand{\sd}{\overset{d}{=}}
\newcommand{\e}{\mathrm{e}}
\newcommand{\ii}{\mathrm{i}}
\newcommand{\diag}{\mathrm{diag}}
\newcommand{\Uphi}{U_\varphi}
\newcommand{\hUphi}{\hat{U}_\varphi}
\newcommand{\uphi}{u_\varphi}
\newcommand{\huphi}{\hat{u}_\varphi}
\newcommand{\uphiff}{(\uphi)_{11}}
\newcommand{\duphiff}{\partial_\varphi \uphiff}
\newcommand{\Vi}{V_{\mathrm{in}}}
\newcommand{\Vo}{V_{\mathrm{out}}}
\newcommand{\hVi}{\hat{V}_{\mathrm{in}}}
\newcommand{\hVo}{\hat{V}_{\mathrm{out}}}
\newcommand{\f}{f(U,G_\varphi)}
\theoremstyle{theorem}
\newtheorem{thm}{Theorem}[]
\newtheorem{lemma}{Lemma}[]
\theoremstyle{definition}
\begin{document}

\title{Typicality of Heisenberg scaling precision in multi-mode quantum metrology}

\author{Giovanni Gramegna}
\email{giovanni.gramegna@ba.infn.it}
\affiliation{Dipartimento di Fisica and MECENAS, Università di Bari, I-70126 Bari, Italy }
\affiliation{INFN, Sezione di Bari, I-70126 Bari, Italy}

\author{Danilo Triggiani}
\email{danilo.triggiani@port.ac.uk}
\affiliation{School of Mathematics and Physics, University of Portsmouth, Portsmouth PO1 3QL, UK}

\author{Paolo Facchi}
\affiliation{Dipartimento di Fisica and MECENAS, Università di Bari, I-70126 Bari, Italy }
\affiliation{INFN, Sezione di Bari, I-70126 Bari, Italy}

\author{Frank A. Narducci}
\affiliation{Department of Physics, Naval Postgraduate School, Monterey, CA, United States 
}

\author{Vincenzo Tamma}
\email{vincenzo.tamma@port.ac.uk}
\affiliation{School of Mathematics and Physics, University of Portsmouth, Portsmouth PO1 3QL, UK}
\affiliation{Institute of Cosmology and Gravitation, University of Portsmouth, Portsmouth PO1 3FX, UK}

\date{\today}

\begin{abstract}
We propose a measurement setup reaching Heisenberg scaling precision for the estimation of any distributed parameter $\varphi$ (not necessarily a phase) encoded into a generic $M$-port linear network composed only of passive elements. The scheme proposed can be easily implemented from an experimental point of view since it employs only Gaussian states and Gaussian measurements. Due to the complete generality of the estimation problem considered, it was predicted that one would need to carry out an adaptive procedure which involves both the input states employed and the measurement performed at the output; we show that this is not necessary: Heisenberg scaling precision is still achievable by only adapting a single stage. The non-adapted stage only affects the value of a pre-factor multiplying the Heisenberg scaling precision: we show that, for large values of $M$ and a random (unbiased) choice of the non-adapted stage, this pre-factor takes a typical value which can be controlled through the encoding of the parameter $\varphi$ into the linear network. 
\end{abstract}

\pacs{Valid PACS appear here}
\maketitle

\section{Introduction}
The precision achievable in a measurement when all experimental noise sources are minimized is ultimately determined by the discreteness of all physical phenomena: electronic devices will suffer the discreteness of the electric charge, whereas the quantum nature of light will affect optical devices. Due to this quantum noise, 
the error in the estimation of a physical parameter $\varphi$ through a measurement employing $N$ probes (e.g. photons, electrons) is strongly limited by the so-called ``shot noise'' factor of $1/\sqrt{N}$. However, it has been proven that quantum features such as entanglement and squeezing can be exploited to go beyond the shot-noise limit  and reach a precision of order $1/N$, the so-called Heisenberg limit (HL)~\cite{Giovannetti2004,Giovannetti2006,Dowling2008,Giovannetti2011,Dowling2015,Shapiro84,Wineland92,Maccone2019}. 

Several quantum metrological problems have been largely studied and a few approaches have been proposed to reach a HL sensitivity. However, these protocols are usually difficult to implement experimentally due to the convoluted and challenging measurement procedures~\cite{Helstrom1969,PhysRevLett.72.3439,Seshadreesan_2011} and the fragile quantum coherence needed in the input states~\cite{Giovannetti2006,PhysRevA.92.042115,PhysRevA.76.013804,PhysRevA.90.025802}. 
Gaussian states, on the other hand, provide a promising avenue for quantum optical technologies~\cite{Weedbrook2012,Adesso2014}, since they are easier to create and manipulate experimentally compared to non-Gaussian ones, such as Fock states. Moreover, they allow a complete analytical treatment from a theoretical point of view~\cite{Paris2005,Weedbrook2012,Adesso2014}. In particular, the \textit{squeezing} of a Gaussian state, which allows for highly reduced-noise signals, appears to be a valuable tool to reach quantum super-sensitive precision~\cite{Maccone2019}. From a metrological perspective, squeezed states are often used along with Gaussian measurements~\cite{monras2006,Matsubara_2019,Oh2019}, defined as measurement schemes producing a Gaussian probability distribution of the outcomes for any Gaussian state~\cite{Weedbrook2012}. Homodyne and heterodyne detection represent paradigmatic examples of Gaussian measurements. It has been shown, both theoretically~\cite{adaptiveHomodyne1995} and experimentally~\cite{adaptiveHomodyne2002}, that an adaptive homodyne phase estimation performs better than heterodyne detection, and approaches closer to the intrinsic quantum uncertainty than any previous technique when no prior knowledge of the phase is given. The importance of feedback and adaptivity in quantum estimation protocols has been underlined also in subsequent works~\cite{monras2006,Aspachs2009}. Adaptiveness can be avoided in an optimal protocol (or near optimal) only if some constraint in the range of variation of the parameter is given~\cite{Gatto2019,Gatto2020,PhysRevA.96.052118}.

Within the domain of quantum optics, photons are sent as probes through an interferometer where a parameter $\varphi$ to be estimated is encoded. The information about the parameter is imprinted then in the output state of the photons, and it can be extracted by a suitable measurement. The situation which has been often considered is the case where $\varphi$ is an optical phase~\cite{Shapiro84,Giovannetti2004,monras2006,Dowling2008,Oh2019} or a phase-like parameter~\cite{Giovannetti2006,Giovannetti2011}. These results clearly apply also to situations in which other quantities of interest (e.g.\ a distance) can be converted into an optical phase~\cite{Dowling2008}, but they fail to cover more general situations, e.g.\ the unknown parameter is distributed among several components of the interferometer. Recently, some progress has been made along this direction concerning the estimation of particular functions of multiple parameters distributed in a specific manner within a particular network. \cite{Ge2018,Zhuang2018,Sidhu2020,Qian2019,Xia2020,Guo2020,Nair2018}. It has been also shown in a recent work~\cite{Matsubara_2019} that the presence of a single unknown parameter distributed in multiple nodes of an arbitrary network introduces non-trivial complications if no constraints are given on the range of values the parameter is allowed to assume: in fact, it appears that a simultaneous adaptive procedure both in the input probe and in the measurement is needed in order to reach the HL, making the whole scheme quite unfeasible from a practical point of view. Furthermore, the proposed scheme requires an unquantified precision and number of resources  in the adaptive procedure.

In this work we demonstrate the typicality of the Heisenberg limited sensitivity with a simple metrological technique which overcomes at the same time all these serious drawbacks. In particular, we consider a general scenario in which $\varphi$ can be any parameter embedded into an arbitrary linear passive $M$-modes interferometer: it can be a parameter characterizing any specific component of the interferometer, or arbitrarily distributed among different components of the circuit. We will show that an experimentally feasible scheme achieving Heisenberg scaling is typically possible in such a general scenario. 
In our scheme (see~\figurename\ref{fig:Generic setup}), a single-mode squeezed vacuum state is sent through a linear, passive preliminary stage which scatters the input photons among all the $M$ channels of the interferometer, in order to extract the information on $\varphi$ which is distributed among all the modes. A second auxiliary stage at the output of the interferometer refocuses the photons in the only observed output port. By employing a single-mode homodyne detection, we present two broad conditions which together suffice to reach the HL: the first being the requirement that most of the injected photons are successfully refocused on the observed output mode; the second simply being a minimal-resolution requirement on the homodyne measurement. Remarkably, these conditions allow for imperfections both in the refocusing and in the measurement. Heisenberg scaling is thus achievable by choosing two additional passive and linear stages, whose roles are to conveniently scatter the input probe to all the $M$ modes, and then to refocus the photons. Despite the fact that the choice of these unitary stages can be in general $\varphi$-dependent, we show that it is always sufficient to adapt only one of the two stages, which we will thus call optimized, while the other stage can be chosen arbitrarily and independent of the parameter. Moreover we show that the optimized stage can be prepared
with a precision which is achievable using only classical resources, or by means of a preliminary classical estimation. 
This is also consistent with the result obtained in~\cite{GrTrFaNaTa}; namely, that a preliminary classical  estimation of $\varphi$ yields enough information to correctly prepare the optimized stage and thus to achieve Heisenberg scaling in the estimation protocol. Finally we show that the non-optimized stage affects the precision simply by a constant pre-factor. Using typicality and results of measure concentration in high-dimensional vector spaces, we show that distributing the unknown parameter among an high number of modes $M$ allows this pre-factor to typically take non-vanishing values.

The rest of the paper is organized as follows. In Section~\ref{sec:proposedSetup} we describe the proposed optical interferometer and the relative Fisher Information. In Section~\ref{sec:SufficientConditions} we use the Fisher Information to prove that the Heisenberg scaling can be achieved under suitable physical conditions; we then show how, even in the most general case, all the adaptivity can be confined within one of the auxiliary stages. In Section~\ref{sec:TypicalSensitivity} we discuss the typicality of our results for interferometers with a large number of channels. Finally, in Section~\ref{sec:conclusions} we draw some conclusions and discuss the outlook.

\section{The proposed setup}\label{sec:proposedSetup}

Let us consider a metrological scheme where the parameter to be estimated is encoded into an $M$-ports passive linear network described by the unitary $\hat{U}_\varphi$ acting on $M$ bosonic modes $\hat{a}_j$  ($j=1,\dots,M$) obeying the canonical commutation relations $[\hat{a}_j,\hat{a}_k^\dagger]=\delta_{jk}$, and $ [\hat{a}_j,\hat{a}_k]=[\hat{a}_j^\dagger,\hat{a}_k^\dagger]=0$. For a passive linear network, the action of $\hat{U}_\varphi$ on the annihilation operators is associated with an $M\times M$ unitary matrix via:
\begin{equation}
\hat{U}^\dag_\varphi\hat{a}_j\hat{U}_\varphi = \sum_{k=1}^M (U_\varphi)_{jk}\hat{a}_k.
\label{eq:UnitaryTransformation}
\end{equation}
The unitarity of the matrix $U_\varphi$ is strictly related to the conservation of the number of photons injected. By  definition, $U_\varphi$ is the matrix of the single-photon transition amplitudes, i.e. $\abs{(U_\varphi)_{jk}}^2$ is the probability that a single photon injected into the $k$-th input channel ends up in the $j$-th output channel due to the action of the network.

\begin{figure}[t!]
	\centering
	\includegraphics{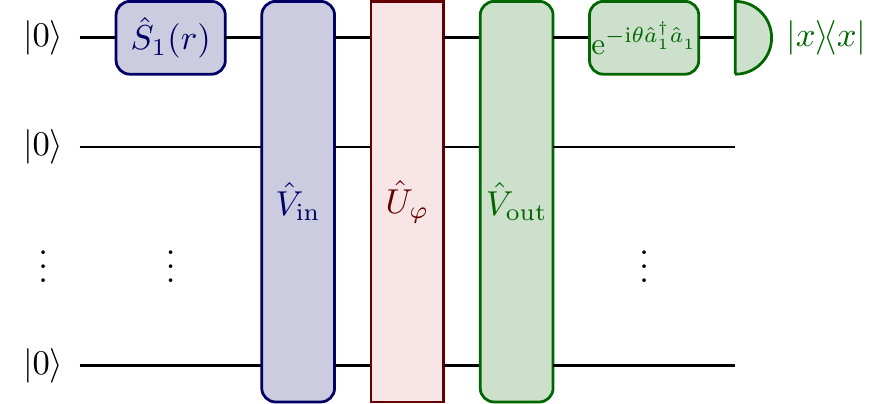}
	\caption{Block diagram of the investigated setup. A single-mode squeezed vacuum state with real squeezing parameter $r $ is injected into the first preparation stage $\hVi$, which inputs the linear network  $\hat{U}_\varphi$ encoding the parameter $\varphi$ to be estimated. After the network, there is a second stage $\hVo$ before the measurement. Finally, homodyne detection on the first  output port of $\hVo$ is performed, and the quadrature field $\hat{x}_\theta$ is measured. 
In order to reach Heisenberg scaling sensitivity in the estimation of $\varphi$, it suffices to optimize only one of the two auxiliary stages, in such a way that one of the conditions~\eqref{eq:ConditionOnVo} or~\eqref{eq:ConditionOnVi} holds.}
	\label{fig:Generic setup}
\end{figure}

We now propose an estimation scheme reaching Heisenberg scaling if suitable conditions are satisfied. As shown in \figurename{ \ref{fig:Generic setup}}, the preparation of the input probe consists in two steps: first, we inject a single-mode squeezed vacuum in the first port of $\hVi$.
Then, the unitary stage $\hVi$ is used to scatter the photons injected among all the modes.
The input state of the network $\hUphi$ in our protocol is therefore given by $\ket{\psi_0} = \hVi\hat{S}_1(r)\ket{\mathrm{vac}}$, 
where $\hat{S}_1(r)=\e^{\frac{r}{2}(\hat{a}_1^{2}-\hat{a}_1^{\dagger 2})}$ is a single-mode squeezing operator with squeezing parameter $r>0$, and $\ket{\mathrm{vac}}=\ket{0}^{\otimes M}$ is the $M$-mode vacuum state. The average number of photons injected in the apparatus is thus $N=\sinh^2 r$. The state $\hUphi\ket{\psi_0}$ at the output of the network $\hUphi$ undergoes the unitary $\hVo$ which refocuses all the photons into a single mode, namely the first one, where a homodyne measurement of the field quadrature $\hat{x}_\theta$ is performed. If the refocusing procedure is not perfect there will be some photons scattered into other channels with probability $1 - P_\varphi$, where 
\begin{equation}
\label{eq:ProbabilityTransition}
P_\varphi = 
\abs{(\Vo U_\varphi \Vi)_{11}}^2
\end{equation}
is defined by the the probability amplitude ${(\uphi)_{11} = (\Vo\Uphi\Vi)_{11}}$  for the transition from the first input to the first output port in the overall interferometer ${\uphi = \Vo\Uphi\Vi}$, with $\Vi$ and $\Vo$ being the single-photon unitary matrix representatives of $\hVi$ and $\hVo$ respectively, obtained analogously to \eqref{eq:UnitaryTransformation}.

The homodyne measurement is described by a Positive Operator Valued Measure (POVM) $\mathcal{M}=\{\hat{\Pi}_x\} $, whose elements are defined by
\begin{equation}
\hat{\Pi}_{x} =  \e^{\ii \theta \hat{a}_1^\dag\hat{a}_1}\ket{x}_{11}\!\bra{x}\e^{-\ii \theta \hat{a}_1^\dag\hat{a}_1}.
\label{eq:POVM}
\end{equation}

The probability of obtaining a value $x$  from a measurement of the quadrature $\hat{x}_\theta = \e^{\ii\theta\hat{a}^\dag_1\hat{a}_1}\hat{x}_1\e^{-\ii\theta\hat{a}^\dag_1\hat{a}_1}$ is then given by Born's rule
\begin{equation}
	p(x|\varphi)=\Tr(\hat{\Pi}_x \huphi \hat{S}_1(r)\ketbra{\mathrm{vac}}\hat{S}^\dag_1(r)\huphi^\dag),
	\label{eq:Born}
\end{equation} 
over the output state $\huphi\hat{S}_1(r)\ket{\mathrm{vac}}$ after the overall interferometric evolution $\huphi = \hVo\hUphi\hVi$, which yields (see Appendix \ref{app:ProbabilityDistribution})
\begin{equation}
p(x|\varphi) = \dfrac{1}{\sqrt{2\pi\Delta_\varphi}}\exp\left(-\frac{x^2}{2\Delta_\varphi}\right),
\label{eq:ProbabilityDistribution}
\end{equation}
where the variance of the Gaussian distribution
\begin{multline}
\Delta_\varphi = \dfrac{1}{2}\big(1+\abs{(\uphi)_{11}}^2(\cosh{2r }-1)+\\
+\Re[\e^{-2i\theta}(\uphi)_{11}^2]\sinh 2r \big),
\label{eq:HomodyneVariance}
\end{multline}
encodes the parameter $\varphi$ through the interferometric transition amplitude $(\uphi)_{11} = (\Vo\Uphi\Vi)_{11}$.

It is known from classical estimation theory that the maximum precision attainable when inferring the value of the unknown parameter $\varphi$, is given by the  so-called Cram{\'e}r-Rao bound~\cite{cramer1999mathematical,rao1992information}
\begin{equation}\label{eq:CRB}
\delta\varphi\geqslant \dfrac{1}{\sqrt{\nu F(\varphi)}}=\delta\varphi_\mathrm{min},
\end{equation}
where $\nu$ is the number of measurments performed, while 
$F(\varphi)$ is the Fisher Information (FI):
\begin{equation}
F(\varphi)=\int p(x|\varphi) \left(\frac{\partial \log p(x|\varphi)}{\partial \varphi}\right)^2 \ud x.
\label{eq:Fisher}
\end{equation}
The bound \eqref{eq:CRB} can be asymptotically saturated through post-processing Bayesian data analysis \cite{Pezze2008,olivares2009,berni2015}.

The Fisher Information related to a Gaussian probability distribution with variance $\Delta_\varphi$ can be evaluated by inserting \eqref{eq:ProbabilityDistribution} into \eqref{eq:Fisher}, and it reads
\begin{equation}
F(\varphi)=\dfrac{1}{2}\left(\dfrac{\partial_\varphi\Delta_\varphi}{\Delta_\varphi}\right)^2.
\label{eq:FisherGaussian}
\end{equation}
 
Plugging~\eqref{eq:HomodyneVariance} into~\eqref{eq:FisherGaussian}, it is possible to explicitly evaluate the FI (see Appendix \ref{Appendix:Fisher}) obtaining:

\begin{equation}
F(\varphi) = 2\left(\dfrac{(\partial_\varphi P_\varphi)f(N) -2P_\varphi(\partial_\varphi\gamma_\varphi) h(N)}{1+2P_\varphi f(N)}\right)^2,
\label{eq:Fisher0.1}
\end{equation}
where
\begin{equation}
f(N):=N\left(1+\cos[2(\gamma_\varphi-\theta)]\sqrt{1+1/N} \right),
\label{eq:fDefinition}
\end{equation}
\begin{equation}
h(N):=N\sin[2(\gamma_\varphi-\theta)]\sqrt{1+1/N},
\label{eq:hDefinition}
\end{equation}
with 
\begin{equation}
\gamma_\varphi=\arg\uphiff = \arg{(\Vo U_\varphi \Vi)_{11}}
\end{equation} 
being the accumulated phase through the interferometric evolution. 

\section{Heisenberg scaling}
\label{sec:SufficientConditions}

\begin{figure}[t!]
\includegraphics{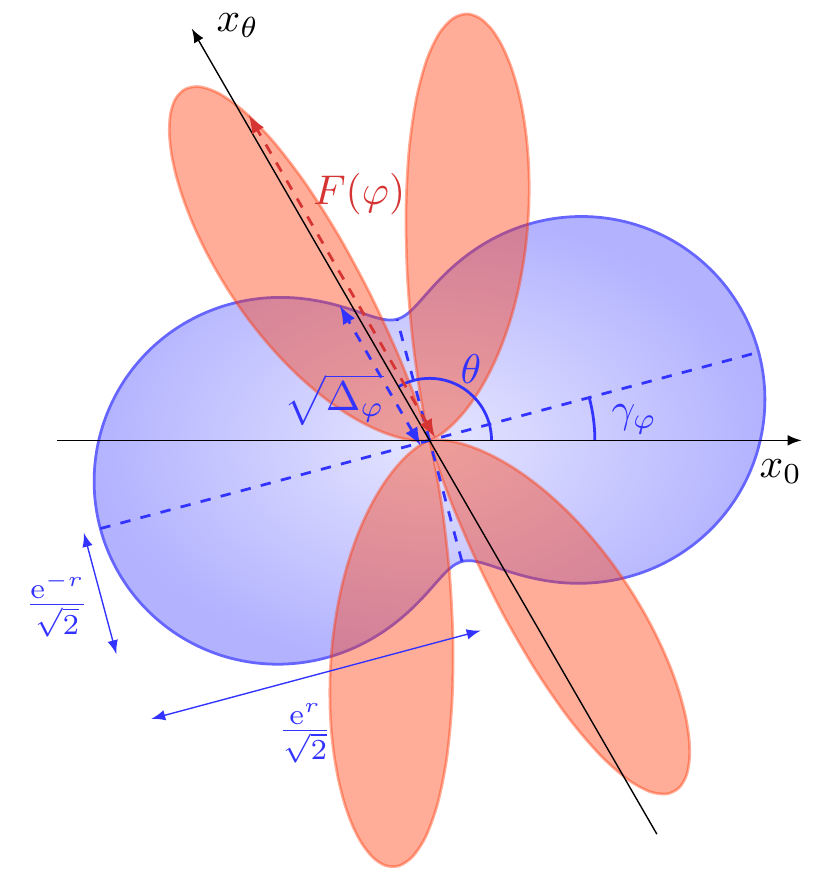}
\caption{Phase space representation of the squeezed vacuum state (with squeezing parameter $r\e^{2\ii\gamma_\varphi}$) at the first output channel of the whole setup shown in \figurename{ \ref{fig:Generic setup}} (blue), and of the Fisher information~\eqref{eq:Fisher0.1} (four red lobes). We have considered for simplicity the case where all the photons are refocused in the first output channel (when condition \eqref{eq:Condition1} reduces to $P_\varphi=1$). Given any axis at an angle $\theta$ with respect to the horizontal axis, corresponding to the $x_0=x$ quadrature, the distance between its intersections with the ellipse and the origin represents the standard deviation $\sqrt{\Delta_\varphi}$ of the quadrature $\hat{x}_\theta$. In other words, the blue graph is the polar plot of $\sqrt{\Delta_\varphi}$ shown in~\eqref{eq:HomodyneVariance} as a function of $\theta$. A polar plot of the Fisher Information is overlaid in red. The Fisher Information takes vanishing values if the minimum-variance quadratures are measured, namely for $\theta_{\mathrm{min}} = \gamma_\varphi \pm \pi/2$. This happens because the variance $\Delta_\varphi$ of the quadrature along $\theta_{\min}$ is locally insensitive to the variations of $\varphi$. Thus, one needs to get far enough from $\theta_{\min}$ to achieve a suitable high value of the Fisher Information. In particular, we have shown that for a large number $N$ of photons it is enough to move from $\theta_{\min}$ of an additional angle of the order $1/N$ as in \eqref{eq:Condition2} to reach the Heisenberg scaling in the measure of the parameter $\varphi$.
	}
\label{Fig:Theta&Gamma}
\end{figure}

We now demonstrate that Heisenberg scaling sensitivity can be achieved in the proposed metrological setup shown in \figurename{ \ref{fig:Generic setup}}, if two conditions are met. The first condition is the constraint that the average number of photons scattered into channels which are not measured is a finite quantity $\ell_{\varphi}$, independent of $N$, which translates into the condition
\begin{equation}
P_\varphi= 1 - \dfrac{\ell_{\varphi}}{N} +  O \left(\dfrac{1}{N^2}\right),\quad \ell_{\varphi}\geq0
\label{eq:Condition1}
\end{equation}
on the probability $P_\varphi$ in \eqref{eq:ProbabilityTransition}. Here, $\ell_\varphi$  depends in general on the linear network $\Uphi$ in which the parameter is embedded,  and on the auxiliary stages $\Vi$ and $\Vo$ in~\figurename\ref{fig:Generic setup}. In subsection~\ref{sec:OneSided} we will show how, for any given arbitrary $\Uphi$, it is possible to optimize only a single stage so that the probability distribution in \eqref{eq:ProbabilityTransition} can be expressed as \eqref{eq:Condition1}. The second condition relates the accumulated phase ${\gamma_\varphi = \arg\uphiff}$ through the whole setup and the phase $\theta=\theta_\varphi$ of the measured quadrature field $\hat{x}_\theta$, according to
\begin{equation}
\theta_\varphi = \gamma_\varphi \pm\dfrac{\pi}{2} + \dfrac{k_\varphi}{N} +  O \left(\dfrac{1}{N^2}\right),\quad \mathit{k}_\varphi\neq 0,
\label{eq:Condition2}
\end{equation}
where $k_\varphi$ can depend on $\varphi$, but is assumed to be independent of $N$. In practice, one can even fix  $k_\varphi$ to a constant value without using additional resources. 

A heuristic explanation behind this condition can be found in \figurename{ \ref{Fig:Theta&Gamma}}: in order to maximise the ratio in~\eqref{eq:FisherGaussian} while keeping constant $N=\sinh^2 r $, the choice of the quadrature  $\hat{x}_\theta$ to be measured is a  trade-off between two opposite behaviours. One consists in minimizing the variance $\Delta_\varphi$ in the denominator of~\eqref{eq:FisherGaussian}, while the other consists in maximizing the sensitivity of  the variance with respect to the variations of $\varphi$, namely choosing $\theta$ such that $\partial_\varphi\Delta_\varphi$ in the numerator is maximal. The former is met for $\theta$ as close as possible to $\gamma \pm\pi/2$, since $\hat{x}_{\gamma\pm\pi/2}$ are the squeezed quadratures after the rotation in phase space by the phase $\gamma_\varphi$  accumulated through the interferometer; the latter instead requires a choice of $\theta-\gamma$  far enough from the stationary points of the variance at $\gamma \pm \pi/2$, where $\Delta_\varphi$ and therefore the overall probability distribution $p(x|\varphi)$ are insensitive to variation in the parameter $\varphi$. Noticeably, the larger is $N$, and thus the squeezing parameter, the closer to the squeezed direction the quadrature field should be measured, as can be seen in~\eqref{eq:Condition2}

In order to prove the claim of HL scaling, we will evaluate the asymptotics of the Fisher information~\eqref{eq:Fisher0.1} as $N\rightarrow\infty$. Substituting the value $\theta=\theta_\varphi$ in~\eqref{eq:Condition2} into~\eqref{eq:fDefinition} and~\eqref{eq:hDefinition}, we get
\begin{equation}
f(N)= -\dfrac{1}{2}+\dfrac{2k_\varphi^2}{N}+\dfrac{1}{8N}+  O \left(\dfrac{1}{N^2}\right),
\label{eq:fAsymptotic}
\end{equation}
\begin{equation}
h(N) = 2k_\varphi\left(1+\dfrac{1}{2N}\right)+ O \left(\dfrac{1}{N^2}\right).
\label{eq:hAsymptotic}
\end{equation}
Hence, substituting~\eqref{eq:fAsymptotic} and~\eqref{eq:hAsymptotic} in~\eqref{eq:Fisher0.1}, and neglecting higher order terms, the asymptotic behavior of the Fisher information reads
\begin{equation}
F\left(\varphi\right)
	\sim 8\varrho(k_\varphi,\ell_{\varphi})(\partial_\varphi\gamma_\varphi)^2N^2,
	\label{eq:FisherWithVarrho}
\end{equation}
with
\begin{equation}
\varrho(k,\ell)=\left(\dfrac{8k}{1+16k^2+4\ell}\right)^2.
\label{eq:VarrhoDef}
\end{equation}
The quadratic scaling in the mean number photons $N$ in~\eqref{eq:FisherWithVarrho} finally proves that conditions~\eqref{eq:Condition1} and~\eqref{eq:Condition2} suffice to reach the Heisenberg scaling.

The asymptotics for the Fisher Information carries two pre-factors, $\varrho(k_\varphi,\ell_{\varphi})$ and $(\partial_\varphi \gamma_\varphi)^2$. We easily notice that the pre-factor $\varrho(k,\ell)$ vanishes only at $k=0$, and attains its maximum  at $\ell=0$, $k=\pm 1/4$:
\begin{equation}
\varrho(k,\ell)\leq\varrho(\pm 1/4,0)=1,
\end{equation}
so that, with this choice of the constants $k$ and $\ell$, the Fisher Information asymptotically reads
\begin{equation}\label{eq:FisherOptScaling}
F\left(\varphi\right)\Bigr|_{\substack{k=\pm\frac{1}{4}\\\ell=0}}\sim 8(\partial_\varphi\gamma_\varphi)^2N^2
\end{equation}
Moreover, $\varrho(k,\ell)$ is a decreasing function of $\ell$ independent of $k$, so that $\ell=0$ is always the best case, meaning that the less photons that are scattered in different channels, the higher the sensitivity in the estimation. Instead, for a fixed arbitrary positive value of $\ell$, the maximum of $\varrho(k,\ell)$ is reached for $k=\pm\sqrt{4\ell+1}/4$. 

\subsection{One-sided adaptivity}
\label{sec:OneSided}
Since $P_\varphi=\abs{(\Vo U_\varphi \Vi)_{11}}^2$, condition~\eqref{eq:Condition1} may appear to require a simultaneous optimization of the input $\Vi$ and the output $\Vo$ in a parameter-dependent way.
This two-sided adaptation can be quite difficult to realize in practice.

However, we are going to show that in fact conditions~\eqref{eq:Condition1} and~\eqref{eq:Condition2} can always be satisfied with just a one-sided parameter-dependent adaptation, which can be performed either at the input or at the output of the network equivalently. And remarkably, this adaptation can be accomplished by performing a preliminary classical, shot-noise limited, estimation of $\varphi$. 

In particular, one can choose to adaptively optimize only $\Vo$ and fix $\Vi$ to an arbitrary parameter-independent unitary stage: in this case, one can set the parameter-dependent condition
\begin{equation}
(\Vo)_{1i}=(\Vi^\dag U_\varphi^\dag)_{1i}+ O \left(\frac{1}{\sqrt{N}}\right).
\label{eq:ConditionOnVo}
\end{equation}
Alternatively, it is possible to adaptively optimize only $\Vi$ with the condition
\begin{equation}
(\Vi)_{i1} = (U_\varphi^\dag \Vo^\dag)_{i1}+ O \left(\frac{1}{\sqrt{N}}\right),
\label{eq:ConditionOnVi}
\end{equation}
while 
$\Vo$ can be  arbitrarily chosen. 

Remarkably both equations~\eqref{eq:ConditionOnVo} and~\eqref{eq:ConditionOnVi} imply that an error of the order of $ O \left(\frac{1}{\sqrt{N}}\right)$ is allowed to prepare the optimized stage to reach condition~\eqref{eq:Condition1}. To show that both equations~\eqref{eq:ConditionOnVo} and~\eqref{eq:ConditionOnVi} satisfy condition \eqref{eq:Condition1}, we notice that $P_\varphi$ can be expressed as a transition probability
\begin{equation}
	P_\varphi=\abs{(\Vo U_\varphi \Vi)_{11}}^2=\abs{\braket{v_{\mathrm{out}}}{v_{\mathrm{in}}}}^2,
\end{equation}
between {the two normalized vectors $\ket{v_{\mathrm{in}}} = U_\varphi \Vi\ket{e_1}$ and $\ket{v_{\mathrm{out}}} = \Vo^\dagger\ket{e_1}$,
with $\ket{e_1}=(1,0,\dots,0)^T$. Then, equation~\eqref{eq:ConditionOnVo} translates into
\begin{equation}\label{eq:epsilon}
	\ket{v_{\mathrm{out}}} =  \ket{v_{\mathrm{in}}} + \ket{\delta v} =  \e^{i\varepsilon H} \ket{v_{\mathrm{in}}}, \quad \varepsilon= O \left(\frac{1}{\sqrt{N}}\right),
\end{equation} 
with some $H=H^\dagger$, by unitarity.
Therefore, we can see that
\begin{align}\notag
	P&=\abs{\braket{v_{\mathrm{out}}}{v_{\mathrm{in}}}}^2=\abs{\braket{v_{\mathrm{in}}}{\e^{-i\varepsilon H}v_{\mathrm{in}}}}^2\\
	\notag
	&= \abs{1-i \varepsilon \braket{v_{\mathrm{in}}}{H v_{\mathrm{in}}} - O(\varepsilon^2) }^2 = 1-O(\varepsilon^2)\\
	&=1- O \left(\frac{1}{N}\right)
\end{align}

We can notice that in both equations~\eqref{eq:ConditionOnVo} and~\eqref{eq:ConditionOnVi}, no assumption on the non-optimized stage is made, so that its choice is completely arbitrary. This freedom affects the precision of the estimation of $\varphi$ through the $N$-independent pre-factor $(\partial_\varphi\gamma_\varphi)^2$ which appears in the Fisher Information~\eqref{eq:FisherWithVarrho}. At this point, one may argue that this pre-factor may be vanishing if a poor choice for the non-adapted unitary is made. Remarkably, in the next Section we will show that the pre-factor is typically non-vanishing for random choices of the non-adapted stage and suitably well-behaved given linear networks $\hUphi$.

\section{Typical sensitivity}\label{sec:TypicalSensitivity}
In this section we will address in more detail the study of the pre-factor $(\partial_\varphi\gamma_\varphi)^2$ in the Fisher information \eqref{eq:FisherWithVarrho}, clarifying under what circumstances it can be safely considered non-vanishing and characterizing its magnitude for random choices of the non-optimized stage. First of all, we can link $(\partial_\varphi\gamma_\varphi)^2$ to the derivative of the matrix element $(u_\varphi)_{11}=(\Vo U_\varphi \Vi)_{11}=\sqrt{P_\varphi}\e^{\ii \gamma_\varphi}$:
\begin{align}\notag
\abs{(\partial_\varphi u_\varphi)_{11}}^2&=\abs{\left(\partial_\varphi \sqrt{P_\varphi}+i(\partial_\varphi \gamma_\varphi)\sqrt{P_\varphi}\right)\e^{\ii\gamma_\varphi}}^2\\
&=(\partial_\varphi \sqrt{P_\varphi})^2+(\partial_\varphi \gamma_\varphi)^2 P_\varphi.
\label{eq:du11}
\end{align}
If condition~\eqref{eq:Condition1} is satisfied, equation~\eqref{eq:du11} simplifies to
\begin{equation}\label{eq:du11dga}
(\partial_\varphi \gamma_\varphi)^2 = \abs{(\partial_\varphi u_\varphi)_{11}}^2+  O \left(\frac{1}{N}\right),
\end{equation}
so that the two quantities are equal up to order $1/N$.

Now, if the adaptation is performed in the output, i.e. we choose an arbitrary $\Vi$ and adapt $\Vo$ according to equation~\eqref{eq:ConditionOnVo}, we see that
\begin{equation}
	\abs{(\partial_\varphi u_\varphi)_{11}}^2=(\Vi^\dagger G_\varphi \Vi)_{11}^2+ O \left(\frac{1}{N}\right),
\label{eq:VinArbit}
\end{equation}
where the Hermitian operator
\begin{equation}
	G_\varphi:=\ii U_\varphi^\dagger \frac{\partial_\varphi U_\varphi}{\partial \varphi}
\end{equation}
is the ($\varphi$-dependent) generator of $U_\varphi$.
If, on the other hand, condition~\eqref{eq:Condition1} is realized through an adaptation on the input while taking an arbitrary $\Vo$, then equation~\eqref{eq:ConditionOnVi} implies that
\begin{equation}
\abs{(\partial_\varphi u_\varphi)_{11}}^2=(\Vo U_\varphi G_\varphi U_\varphi^\dagger \Vo^\dagger)_{11}^2+ O \left(\frac{1}{N}\right).
\label{eq:VoutArbit}
\end{equation}

Using equations~\eqref{eq:du11dga}-\eqref{eq:VoutArbit}, we can finally rewrite the asymptotic expression of the Fisher information~\eqref{eq:FisherWithVarrho} as
\begin{equation}\label{eq:FisherPrefactorf}
	F(\varphi)\sim \varrho(k_\varphi,\ell_{\varphi}) f(U,G_\varphi)N^2,
\end{equation}
as $N\to\infty$, where
\begin{equation}\label{eq:prefactor}
f(U,G_\varphi)=(U^\dagger G_\varphi U)_{11}^2,
\end{equation}
with $U=\Vi$ if the optimization is performed on the output, while $U=U_\varphi^\dagger \Vo^\dagger$ if the optimization is carried out on the input.  We emphasize that the pre-factor $\f$ is completely independent of the choice of the optimized stage.

The maximization of the prefactor~\eqref{eq:prefactor} can be realized, for example, if $U=V_\varphi$ is some unitary diagonalizing $G_\varphi$ i.e. satisfying equation $V_\varphi^\dagger G_\varphi V_\varphi=D_\varphi$ with $D_\varphi=\diag(g_1,g_2,\dots,g_M)$ being the diagonal matrix of the eigenvalues of $G_\varphi$, ordered in such a way that $\abs{g_1}=\Norm{G_\varphi}$ is the maximum eigenvalue in absolute value~\cite{Matsubara_2019}.  Actually, it is not necessary to take a diagonalizing unitary to maximize~\eqref{eq:prefactor}, since only the first column of $U$ enters in the definition of $f(U,G_\varphi)$; hence, to maximize $f(U,G_\varphi)$ it is sufficient to require this column to be the eigenvector of $G_\varphi$ corresponding to the maximum eigenvalue $\Norm{G_\varphi}$. 
However, even that requirement would necessitate the complete knowledge of $G_\varphi$, which in general depends on the unknown parameter $\varphi$.
Therefore,  it is more relevant to consider arbitrary choices of the non-adapted network (the unitary $U$) independently of $\varphi$ in order to determine the practical advantages of the obtained Heisenberg scaling precision for finite values of $N$ and only one (classically) adapted stage. 

For this reason, we will perform now a statistical analysis on the typical values which can be assumed by the prefactor $f(U,G_\varphi)$ for random choices of the unitary $U$.
Assuming no prior knowledge of the unitary $U$, we sample it from the unitary group $\mathrm{U}(M)$ according to the  unbiased uniform distribution probability, i.e. the unitarily invariant Haar measure $\mathcal{P}$.

For a random unitary  $U$, sampled according to this distribution, the average value of the prefactor $f(U,G_\varphi)$ can be computed using techniques from random matrix theory (see Appendix~\ref{app:averagesU}):
\begin{equation}
\mathbb{E}[f(U,G_\varphi)] 
=\frac{\Tr(G_\varphi^2)+\Tr(G_\varphi)^2}{M(M+1)},
\label{eq:avef}
\end{equation}
where $\E[\cdot]$ denotes the expectation value over $\mathrm{U}(M)$  with respect to the Haar measure. 

In the trivial case of a generator proportional to the identity, $G_\varphi=\Norm{G_\varphi}\mathbbm{1}$, which corresponds to the case of a network ${U_\varphi=\e^{\ii\Norm{G_\varphi}}\mathbbm{1}}$ acting as a $\varphi$-dependent global phase shifter, we have $\Tr(G_\varphi)^2=M^2\Norm{G_\varphi}^2$ and $\Tr(G_\varphi^2)=M\Norm{G_\varphi}^2$, so that the average value of the pre-factor  equals the maximum one, $f_\mathrm{max}=\Norm{G_\varphi}^2$, in accordance to the fact that in this particular case \textit{every} unitary in $\mathrm{U}(M)$ diagonalizes $G_\varphi$. 

In general, we are interested in determining the conditions which make this average value in \eqref{eq:avef} as large as possible. First of all, we can note directly from expression~\eqref{eq:avef} that eigenvalues of opposite signs can have a detrimental effect on this average, since they lower the value of $\Tr(G_\varphi)$. In general, we can find a lower bound on the average value~\eqref{eq:avef} using Jensen's inequality $\E[X^2]\geqslant \E[X]^2$ to obtain
\begin{equation}\label{eq:lowbound}
\E[f(U,G_\varphi)]\geqslant \E[(U^\dagger G_\varphi U)_{11}]^2=\left[ \frac{\Tr (G_\varphi)}{M} \right]^2,
\end{equation}
where again, the average $\E[(U^\dagger G_\varphi U)_{11}]$ has been computed using standard techniques (see Appendix \ref{app:averagesU}).

\begin{figure}[]
	\centering
	\includegraphics[width=.49\textwidth]{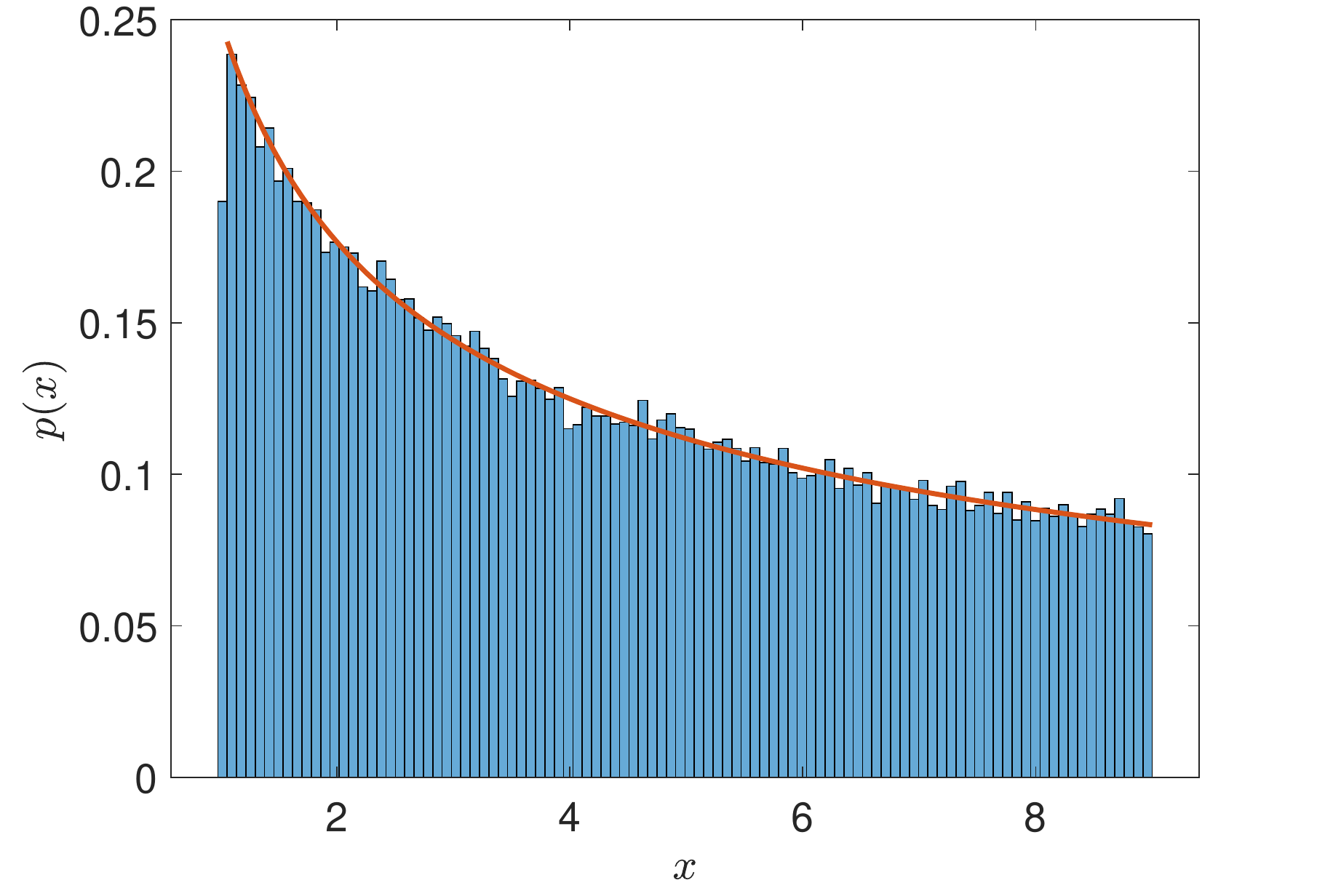}\\
	\includegraphics[width=.49\textwidth]{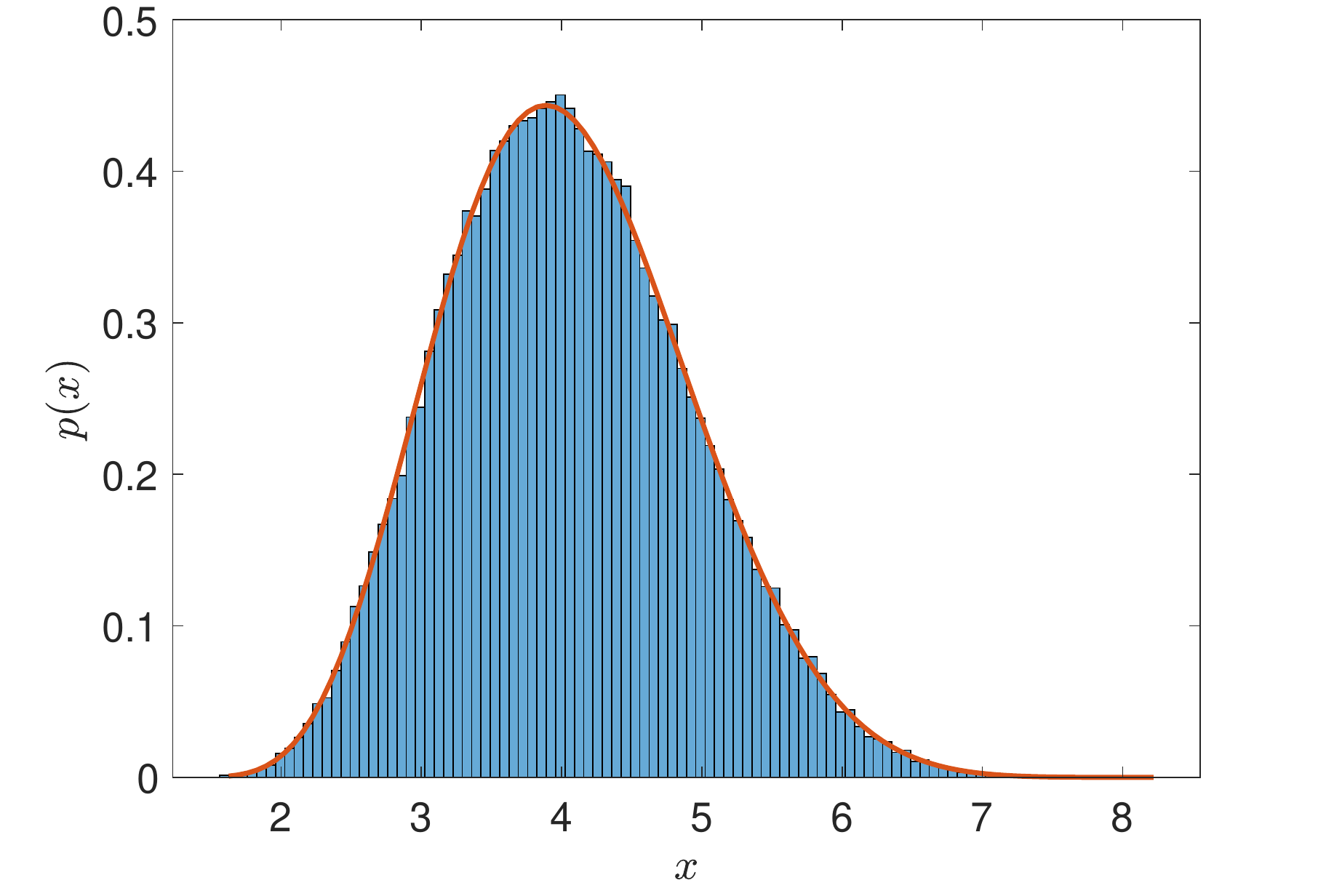}\\
	\includegraphics[width=.49\textwidth]{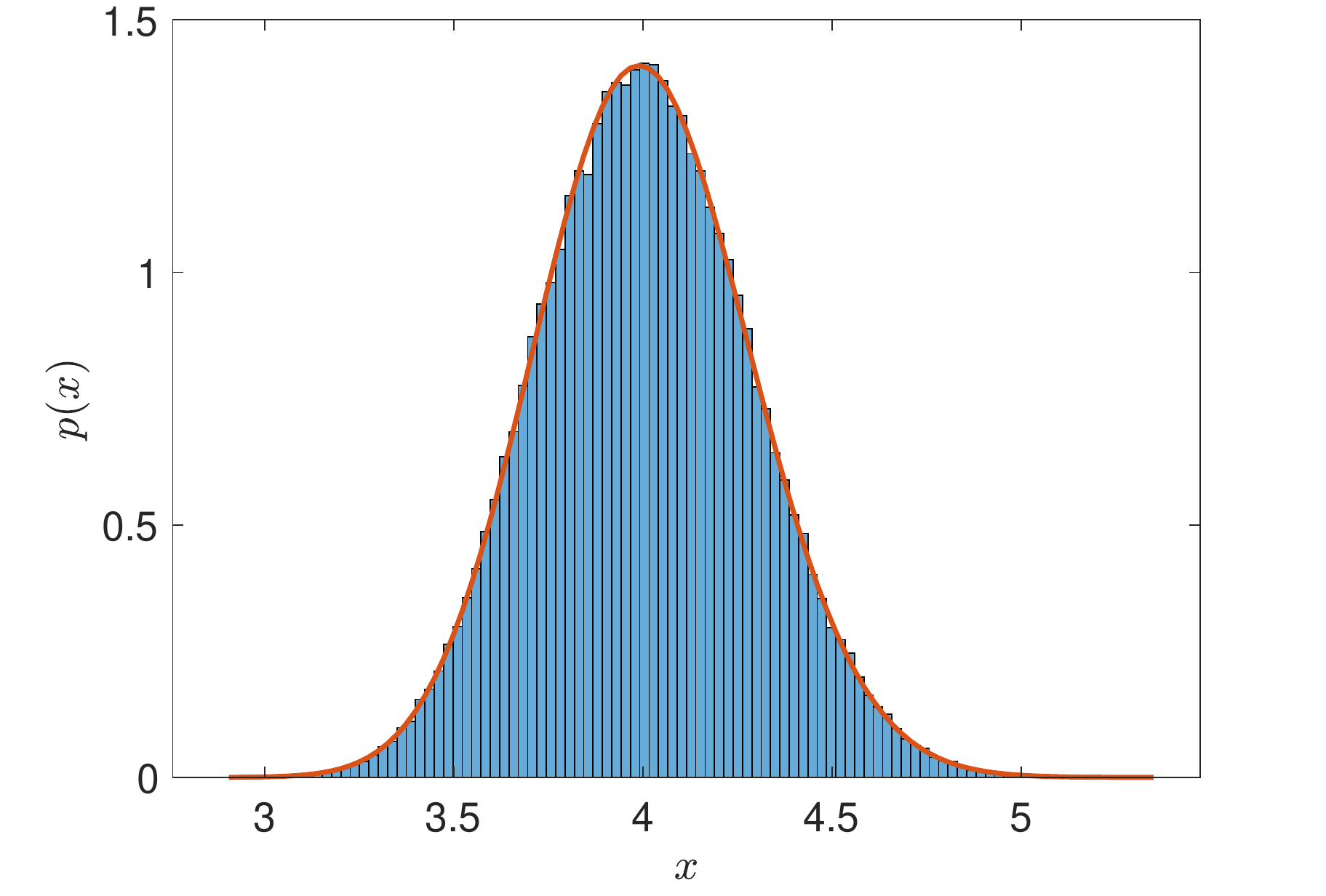}
	\caption{Histograms of the random variable $f(U,G_\varphi)=(U^\dagger G_\varphi U)_{11}^2$ numerically obtained with $10^5$ samplings of $U$ from the unitary group with Haar measure and choosing $G_\varphi$ as a diagonal matrix with half entries $3$s and half entries made of $1$s, with $M=2$ (top), $M=20$ (middle) and $M=200$ (bottom). The normalization is chosen in such a way that the total area under the curve equals $1$. For these particular cases an explicit analytic expression of the distribution is achievable, and is given by  the orange curves. The derivation is shown in Appendix \ref{app:explicitDist}.}
	\label{fig:FIdist}
\end{figure}

Notice that the right-hand side of this inequality is nothing but the square of the average between $G_\varphi$'s eigenvalues. Hence, if we have some degree of control on the eigenvalues, we can achieve a result which is a certain fraction $\alpha$ of the maximum value $f_\mathrm{max}$ if the average of $G_\varphi$'s eigenvalues is at least a fraction $\sqrt{\alpha}$ of the maximum eigenvalue, namely:
\begin{equation}\label{eq:lowerbound}
\left[ \frac{\Tr (G_\varphi)}{M} \right]^2\geqslant \alpha \Norm{G_\varphi}^2 \quad\Rightarrow \quad \E[f(U,G_\varphi)]\geqslant \alpha f_\mathrm{max}.
\end{equation}

However, this may be not sufficient for our purposes, since the average of a random variable alone does not determine its \textit{typical} behaviour: a paradigmatic elementary example is that of a real random variable taking only the values $0$ or $1$ with equal probabilities, thus having an average of $1/2$ even if it \textit{never} takes values close to $1/2$. 

We will show now that this is not the case for the pre-factor $f(U,G_\varphi)$, thanks to the fact that it is a sufficiently well behaved function with respect to the random unitary $U$. In fact, by using  results on concentration of measure in high-dimensional probability spaces, we prove in Appendix~\ref{app:concentration} that for an network with a large number $M$ of ports the pre-factor $f(U,G_\varphi)$ becomes \textit{typical}, meaning that it becomes almost constant with respect to random choices of $U\in\mathrm{U}(M)$ (according to the unitarily invariant measure), hence concentrating around its average value~\eqref{eq:avef},  bounded below by~\eqref{eq:lowerbound}.

In formulas, we have that
\begin{equation}
\mathcal{P}(\abs{f-\mathbb{E}[f]}\geqslant \varepsilon)\leqslant 2 \exp(-\frac{A M}{\Norm{G_\varphi}^4}\varepsilon^2).
\label{eq:concentration}
\end{equation}
where $A=(72\pi^3)^{-1}$. This result tells us that for large interferometers it is extremely unlikely to obtain a pre-factor sensibly different from its average, since for large values of $M$ the probability of $f(U,G_\varphi)$ being different from its average is exponentially suppressed.

This result can be also seen from the exact distribution of the pre-factor computed for some particular cases, which is shown in \figurename{ \ref{fig:FIdist}}. It can be seen from these figures that as $M$ is increased the distribution of the pre-factor concentrates around its average: in particular, for the chosen configuration, a value of $M=20$ is already sufficient to get this concentration. Thus, for any well-behaved linear network $\hUphi$ such that the expectation value in \eqref{eq:avef} is far enough from zero, any random choice of the non-adapted stage in the proposed interferometric setup typically yields an Heisenberg-scaling precision for the estimation of $\varphi$ if the number $M$ of interferometric channels is large enough.

\section{Conclusions}\label{sec:conclusions}
We demonstrated by using a simple metrological technique the typicality of  Heisenberg scaling precision for the estimation of a generic parameter $\varphi$ encoded into an arbitrary $M$-mode network. Our scheme can be applied regardless of the nature of the parameter, which can even  be distributed among several components of the network. In particular, the proposed scheme  makes use of a single-mode squeezed state as a probe, scattered throughout all the modes by means of an auxiliary passive linear stage. Once the information on the parameter is gathered by the probe, this gets refocused on a single output channel by a second auxiliary stage, and then detected with homodyne measurement. The analysis of the Fisher information associated with such scheme reveals that, if a constant average number of photons (not scaling with the total number of photons injected) is scattered into channels different from the one measured, due to an imperfect refocusing procedure, the Heisenberg limit can be asymptotically reached, provided that the homodyne detection is performed with a sufficient resolution. For a distributed parameter, the refocusing is generally parameter-dependent, implying some sort of adaptive procedure in order to correctly refocus the probe. However, we have shown that all the dependence on the parameter can be entirely bounded to only one of the two auxiliary stages, while the other only affects the estimation through a multiplicative pre-factor. Moreover, we have also discussed how all the information on the parameter needed to sufficiently refocus the probe can be obtained with a classical shot-noise precision, meaning that the number of resources required to adaptively optimize the auxiliary stages is not detrimental for the Heisenberg scaling precision. 
Finally, we have shown that, for a large number of modes, Heisenberg scaling is typically obtained by an arbitrary non-adapted stage, with an overwhelming probability, i.e.\ an exponentially suppressed probability of failure.

\section*{Acknowledgements}
This work was supported by the Office of
Naval Research Global (N62909-18-1-2153). PF and GG are partially supported by Istituto Nazionale di Fisica Nucleare (INFN) through the project ``QUANTUM'', and by the Italian National Group of Mathematical Physics (GNFM-INdAM).

GG and DT contributed equally to the drafting of this work.

\appendix

\section{Derivation of the probability distribution in~\eqref{eq:ProbabilityDistribution}}\label{app:ProbabilityDistribution}

In order to evaluate the probability distribution associated to the measure of the quadrature field $x_\theta$, it is natural to proceed with the phase-space formalism for Gaussian quantum optics. For a deep and systematic overview in this topic, several works and reviews can be found in literature~\cite{Scully1997,Schleich,Weedbrook2012}. In this appendix, we will briefly introduce only the concepts and tools needed to obtain the expression~\eqref{eq:ProbabilityDistribution} for the probability distribution $p(x|\varphi)$ of measuring the quadrature value $x$ at the output of the proposed interferometer in \figurename~\ref{fig:Generic setup}. We recall first the definition \eqref{eq:Born}
\begin{equation}
p(x|\varphi)=\bra{\mathrm{vac}}\hat{S}^\dag_1(r )\hat{u}^\dag_\varphi \e^{\ii \theta \hat{a}_1^\dag\hat{a}_1}|x\rangle_{11}\langle x|\e^{-\ii \theta \hat{a}_1^\dag\hat{a}_1}\hat{u}_\varphi\hat{S}_1(r )\ket{\mathrm{vac}},
\label{eq:PDFAppendix}
\end{equation}
where $\huphi$ describes the overall interferometric evolution of the single mode squeezed state $\hat{S}_1(r)\ket{\mathrm{vac}}$. To evaluate~\eqref{eq:PDFAppendix}, it is useful to firstly recover its Fourier transform,
\begin{align}
\chi(\xi|\varphi)&=\int \d x\, p(x|\varphi)\e^{-\ii x\xi}\notag\\
&=\bra{\mathrm{vac}}\hat{S}^\dag_1(r )\hat{u}^\dag_\varphi \e^{\ii \theta \hat{a}_1^\dag\hat{a}_1}\e^{-\ii \hat{x}_1\xi}\e^{-\ii \theta \hat{a}_1^\dag\hat{a}_1}\hat{u}_\varphi\hat{S}_1(r )\ket{\mathrm{vac}}.
\label{eq:FourierTransform}
\end{align}
It is possible to write this characteristic function in a more canonical way. Indeed, first notice that we can write (the derivation is given in Appendix~\ref{app:Displacement} for completeness)
\begin{equation}\label{eq:Displacement}
\e^{\ii \theta \hat{a}_1^\dag\hat{a}_1}\e^{-\ii \hat{x}_1\xi}\e^{-\ii \theta \hat{a}_1^\dag\hat{a}_1}=
\hat{D}(\vec{\xi}_\theta),
\end{equation}
where 
\begin{equation}
\hat{D}(\vec{\xi}_\theta)=\e^{-\ii \vec{\xi}_\theta\cdot\vec{z}}
\end{equation} 
is the \textit{displacement operator},  with
\begin{equation}
\label{eq:DefinitionXiAndZ}
\vec{\xi}_\theta=
\begin{pmatrix}
\xi\cos\theta\\
\xi\sin\theta
\end{pmatrix},\quad
\vec{\hat{z}}=
\begin{pmatrix}
\hat{x}_1\\
\hat{p}_1
\end{pmatrix}.
\end{equation}
Then, using equation~\eqref{eq:Displacement} we can write the characteristic function~\eqref{eq:FourierTransform} as
\begin{equation}
	\chi(\xi|\varphi)=\Tr[\hat{D}(\vec{\xi}_\theta)\hat{u}_\varphi\hat{S}_1(r )\ketbra{\mathrm{vac}}\hat{S}^\dag_1(r )\hat{u}^\dag_\varphi],
	\label{eq:CharacteristicFunctionDisplacement}
\end{equation}
Due to the Gaussian nature of the squeezed vacuum state and the linearity of the interferometric setup, the characteristic function~\eqref{eq:CharacteristicFunctionDisplacement} is a Gaussian bivariate function centred in zero, of the form~\cite{Weedbrook2012}
\begin{equation}
\chi(\xi|\varphi)=\e^{-\frac{1}{2}\vec{\xi}^\mathrm{T}_\theta\mathbb{\sigma}_\varphi\vec{\xi}^\mathrm{\phantom{T}}_\theta},
\label{eq:CharacteristicFunction}
\end{equation}
where $\sigma_\varphi$ is the $2\times 2$ covariance matrix of the whole interferometer output state $\huphi\hat{S}_1(r)\ket{\mathrm{vac}}$, reduced to the first mode. 

In order to evaluate this matrix, we firstly recover the covariance matrix $\Gamma_0$ of the input state $\hat{S}_1(r )\ket{\mathrm{vac}}$, which reads
\begin{equation}
\Gamma_0 =\dfrac{1}{2}
\begin{pmatrix}
\e^{2\mathcal{R}} & 0\\
0 & \e^{-2\mathcal{R}}
\end{pmatrix},
\end{equation}
where $\mathcal{R}$ is the $M\times M$ diagonal matrix with a single non-zero entry $\mathcal{R}_{11}\equiv r $. After the action of the interferometer, the covariance matrix transforms into
\begin{equation}
\label{eq:AppGammaPhi}
\Gamma_\varphi = R_\varphi \Gamma_0 R_\varphi^T,
\end{equation}
where $R_\varphi$ is the orthogonal and symplectic matrix associated with the interferometer unitary matrix $\uphi$
\begin{equation}
R_\varphi = W^\dag 
\begin{pmatrix}
u_\varphi & 0\\
0 & u^*_\varphi
\end{pmatrix}
W,
\end{equation}
and
\begin{equation}
W = \dfrac{1}{\sqrt{2}}
\begin{pmatrix}
\mathbbm{1} & i\mathbbm{1}\\
\mathbbm{1} & -i\mathbbm{1}
\end{pmatrix},
\end{equation}
where $\mathbbm{1}$ is the $M\times M$ identity matrix. $R_\varphi$ can be easily evaluated to be
\begin{equation}
R_\varphi =
\begin{pmatrix}
\re{u_\varphi} & -\im{u_\varphi}\\
\im{u_\varphi} & \re{u_\varphi}
\end{pmatrix},
\end{equation}
so that $\Gamma_\varphi$ in~\eqref{eq:AppGammaPhi} reads
\begin{equation}
\Gamma_\varphi =
\begin{pmatrix}
\Delta X^2_\varphi & \Delta XP_\varphi\\
(\Delta XP_\varphi)^T & \Delta P^2_\varphi
\end{pmatrix}.
\end{equation}
where we have defined the $M\times M$ matrices
\begingroup
\allowdisplaybreaks
\begin{align}
\Delta X^2_\varphi &\equiv  \dfrac{1}{2}\left[\re{\uphi}\e^{2\mathcal{R}}\re{\uphi^\dag} - \im{\uphi}\e^{-2\mathcal{R}}\im{\uphi^\dag}\right]\notag\\
&=\dfrac{1}{2}\left[\Re[\uphi\cosh(2\mathcal{R})\uphi^\dag]+\Re[\uphi\sinh(2\mathcal{R})\uphi^\mathrm{T}]\right],\\
\Delta P^2_\varphi &\equiv \dfrac{1}{2}\left[ -\im{\uphi}\e^{2\mathcal{R}}\im{\uphi^\dag} + \re{\uphi}\e^{-2\mathcal{R}}\re{\uphi^\dag}\right]\notag\\
&= \dfrac{1}{2}\left[\Re[\uphi\cosh(2\mathcal{R})\uphi^\dag]-\Re[\uphi\sinh(2\mathcal{R})\uphi^\mathrm{T}]\right],\\
\Delta XP_\varphi &\equiv \dfrac{1}{2}\left[ -\re{\uphi}\e^{2\mathcal{R}}\im{\uphi^\dag} - \im{\uphi}\e^{-2\mathcal{R}}\re{\uphi^\dag}\right]\notag\\
&=\dfrac{1}{2}\left[-\Im[\uphi\cosh(2\mathcal{R})\uphi^\dag]+\Im[\uphi\sinh(2\mathcal{R})\uphi^\mathrm{T}]\right].
\end{align}
\endgroup
In the second lines of each of the previous expression, we have exploited the fact that $\mathcal{R}$ is real. We are interested to evaluate $\sigma_\varphi$, the covariance matrix reduced to the first mode, which we can now readily write
\begin{equation}
\sigma_\varphi =
\begin{pmatrix}
\left(\Delta X^2_\varphi\right)_{11} & \left(\Delta XP_\varphi\right)_{11}\\
\left(\Delta XP_\varphi\right)_{11} & \left(\Delta P^2_\varphi\right)_{11}
\end{pmatrix},
\end{equation}
and insert in~\eqref{eq:CharacteristicFunction}. Our final step is to invert the Fourier transform to finally get the expression of the probability distribution $p(x|\varphi)$ given in~\eqref{eq:ProbabilityDistribution}. In order to do that, we introduce the $2\times2$ orthogonal matrix
\begin{equation}
O_\theta =
\begin{pmatrix}
\cos\theta & -\sin\theta\\
\sin\theta & \cos\theta
\end{pmatrix},
\end{equation}
such that $\vec{\xi}_\theta = O_\theta\vec{\xi}_0$, with $\vec{\xi}_0 = (\xi,0)^\mathrm{T}$. Then, the characteristic function~\eqref{eq:CharacteristicFunction} can be written in a more convenient way, namely
\begin{equation}
\chi(\xi|\varphi) = \e^{-\frac{1}{2}\vec{\xi}^\mathrm{T}_0 O_\theta^\mathrm{T}\mathbb{\sigma}_\varphi O_\theta\vec{\xi}^\mathrm{\phantom{T}}_0}=\e^{-\frac{1}{2}\left(O_\theta^\mathrm{T}\mathbb{\sigma}_\varphi O_\theta\right)_{11}\xi^2}.
\end{equation}
Exploiting, by the definition of $\mathcal{R}$, the identities
\begin{align}
\left(\uphi\cosh(2\mathcal{R})\uphi^\dag\right)_{11}&= \cosh(2r )\abs{(\uphi)_{11}}^2+ \sum_{i=2}^M \abs{(\uphi)_{1i}}^2 \notag\\
= \cosh(2r )&\abs{(\uphi)_{11}}^2+\left(1-\abs{(\uphi)_{11}}^2\right),\\
\left(\uphi\sinh(2\mathcal{R})\uphi^T\right)_{11}&=\sinh(2r )(\uphi)_{11}^2,
\end{align}
and some elementary trigonometry, the term $\left(O_\theta^\mathrm{T}\mathbb{\sigma}_\varphi O_\theta\right)_{11}$ can be further manipulated to match the expression of $\Delta_\varphi$ given in~\eqref{eq:HomodyneVariance}. In fact
\begin{align}
\big(&O_\theta^\mathrm{T}\mathbb{\sigma}_\varphi O_\theta\big)_{11} = \sum_{i,j=1,2}(O_\theta)_{i1}(O_\theta)_{j1}(\sigma_\varphi)_{ij}\notag\\
&=\cos^2\theta \left(\Delta X^2_\varphi\right)_{11} + \sin^2\theta \left(\Delta P^2_\varphi\right)_{11}\notag+\\
&\quad+2\cos\theta\sin\theta\left(\Delta XP_\varphi\right)_{11}\notag\\
&=\dfrac{1}{2}\cos^2\theta\left(\cosh(2r )\abs{(\uphi)_{11}}^2+\sinh(2r )\Re((\uphi)_{11}^2)\right)+\notag\\
&\quad+\dfrac{1}{2}\sin^2\theta\left(\cosh(2r )\abs{(\uphi)_{11}}^2-\sinh(2r )\Re((\uphi)_{11}^2)\right)+\notag\\
&\quad+\dfrac{1}{2}\left(\cos^2\theta+\sin^2\theta\right)\left(1-\abs{(\uphi)_{11}}^2\right)\notag+\\
&\quad+\cos\theta\sin\theta\sinh(2r )\Im[(\uphi)_{11}^2]\notag\\
&=\dfrac{1}{2}\Big(1+\abs{(\uphi)_{11}}^2(\cosh(2r )-1)\notag+\\
&\quad+\sinh(2r )\left(\cos(2\theta)\Re((\uphi)_{11}^2)+\sin(2\theta)\Im[(\uphi)_{11}^2]\right)\Big)\notag\\
&=\dfrac{1}{2}\Big(1+\abs{(\uphi)_{11}}^2(\cosh(2r )-1)\notag+\\
&\quad+\Re[\e^{-2i\theta}(\uphi)_{11}^2]\sinh(2r )\Big)\equiv\Delta_\varphi.
\end{align}
After applying the inverse Fourier transformation on the Gaussian characteristic function~\eqref{eq:CharacteristicFunction}, the probability distribution reads
\begin{equation}
p(x|\varphi) = \dfrac{1}{2\pi}\int \mathrm{d}\xi\, \chi(\xi|\varphi)\e^{\ii x\xi} =\dfrac{1}{\sqrt{2\pi\Delta_\varphi}}\exp\left(-\frac{x^2}{2\Delta_\varphi}\right)
\end{equation}
as displayed in~\eqref{eq:ProbabilityDistribution}.

\subsection{Derivation of~\eqref{eq:Displacement} } \label{app:Displacement}

Exploiting the unitarity of $\e^{\ii\theta\hat{a}_1^\dag\hat{a}_1^{\vphantom{\dag}}}$, we can write
\begin{align}
\label{eq:ExponentAndUnitary}
\e^{\ii \theta \hat{a}_1^\dag\hat{a}_1}\e^{-\ii  \xi \hat{x}_1}\e^{-\ii \theta \hat{a}_1^\dag\hat{a}_1}
&=\exp (-\ii\xi \e^{\ii \theta \hat{a}_1^\dag\hat{a}_1}\hat{x}_1\e^{-\ii \theta \hat{a}_1^\dag\hat{a}_1}).
\end{align}
By using the definition ${\hat{x}_1 = (\hat{a}_1 + \hat{a}_1^\dag)/\sqrt{2}}$, the first-mode quadrature operator along $\theta$ reads
\begin{equation}
\e^{\ii \theta \hat{a}_1^\dag\hat{a}_1}\hat{x}_1\e^{-\ii \theta \hat{a}_1^\dag\hat{a}_1} = 
\frac{1}{\sqrt{2}} (\hat{a}_1(\theta) + \hat{a}_1(\theta)^\dag),
\label{eq:x1theta}
\end{equation}
where $\hat{a}_1(\theta) = \e^{\ii \theta \hat{a}_1^\dag\hat{a}_1}\hat{a}_1\e^{-\ii \theta \hat{a}_1^\dag\hat{a}_1}$ is the first-mode annihilation operator at time $\theta$ in the Heisenberg picture. The Heisenberg equation is obtained by taking the derivative of $\hat{a}_1(\theta)$ with respect to $\theta$, and reads
\begin{equation}
\frac{\d \hat{a}_1(\theta) }{d\theta} = \ii [\hat{a}_1(\theta)^\dag\hat{a}_1(\theta), \hat{a}_1(\theta)] = - \ii \hat{a}_1(\theta),
\end{equation}
since $[\hat{a}_1(\theta),\hat{a}_1(\theta)^\dag] = [\hat{a}_1,\hat{a}_1^\dag] =1$ and $[\hat{a}_1(\theta),\hat{a}_1(\theta)] = [\hat{a}_1,\hat{a}_1] =0$.
Therefore,
\begin{equation}
\hat{a}_1(\theta) = \e^{-\ii \theta} \hat{a}_1, \qquad \hat{a}_1(\theta)^\dag = \e^{-\ii \theta} \hat{a}_1^\dag,
\label{eq:Heispict}
\end{equation}
the second equality being obtained by taking the adjoint of the first. By plugging~\eqref{eq:Heispict} into~\eqref{eq:x1theta} one gets
\begin{equation}
\xi\e^{\ii \theta \hat{a}_1^\dag\hat{a}_1}\hat{x}_1\e^{-\ii \theta \hat{a}_1^\dag\hat{a}_1} =  \xi\cos\theta\, \hat{x}_1 + \xi\,\sin\theta   \hat{p}_1 = \vec{\xi}_\theta\cdot\vec{z},
\end{equation}
where the vectors $\vec{\xi_\theta}$ and $\vec{z}$ are given in~\eqref{eq:DefinitionXiAndZ}.
Inserting this expression into~\eqref{eq:ExponentAndUnitary}, we finally obtain~\eqref{eq:Displacement}.
\section{Derivation of the Fisher Information in~\eqref{eq:Fisher0.1}}\label{Appendix:Fisher}
In this appendix we will evaluate the FI in~\eqref{eq:Fisher0.1} from the expression in~\eqref{eq:FisherGaussian}. Let us recall that the variance $\Delta_\varphi$ of the Gaussian probability density function~\eqref{eq:ProbabilityDistribution} reads
\begin{multline}
\label{eq:DeltaAppendix}
\Delta_\varphi = \dfrac{1}{2}\big(1+\abs{(\uphi)_{11}}^2(\cosh{2r }-1)+\\
+\Re[\e^{-2i\theta}(\uphi)_{11}^2]\sinh 2r \big).
\end{multline}
The derivative of $\Delta_\varphi$ is written as a sum of two contributions
\begin{multline}
\label{eq:DerivativeDelta}
\partial_\varphi\Delta_\varphi = \dfrac{1}{2}(\cosh 2r  -1)\partial_\varphi\abs{(\uphi)_{11}}^2+\\
+\dfrac{1}{2}\sinh 2r \Re[\e^{-2i\theta}\partial\varphi(\uphi)_{11}^2].
\end{multline}
The derivative in the first contribution is thus evaluated
\begin{align}
\partial_\varphi \abs{\uphiff}^2 
&=\uphiff^{*}\duphiff + \uphiff \duphiff^{*}\notag\\
&= 2\Re[ \uphiff^*\duphiff]
\end{align}
while the derivative in the second contribution reads
\begin{align}
\Re&[\e^{-2i\theta}\partial_\varphi \uphiff^2] = 2\Re[\e^{-2i\theta}\uphiff\duphiff]
\end{align}
Then, defining $\gamma_\varphi$ as the phase of $\uphiff$, and recalling that $\sinh^2 r  = N$, ~\eqref{eq:DerivativeDelta} reads
\begin{align}
&\partial_\varphi \Delta_\varphi =\notag\\
&= \Re\left[\uphiff^*\duphiff\left((\cosh 2r  -1)+\e^{2i(\gamma_\varphi-\theta)}\sinh 2r \right)\right]\notag\\
&=2\Re\left[\uphiff^*\duphiff\left(N+\e^{2i(\gamma_\varphi-\theta)}\sqrt{N^2+N}\right)\right],
\label{eq:DerivativeDeltaFinal}
\end{align}
while~\eqref{eq:DeltaAppendix} can be written as
\begin{multline}
\Delta_\varphi = \dfrac{1}{2}\Big(1+2\abs{\uphiff}^2\times \\
\times\big(N+\cos(2\gamma_\varphi-2\theta)\sqrt{N^2+N}\big)\Big).
\label{eq:DeltaFinal}
\end{multline}
Inserting the expressions~\eqref{eq:DerivativeDeltaFinal} and~\eqref{eq:DeltaFinal} into~\eqref{eq:FisherGaussian}, we get the FI 
\begin{align}
&F\left(\varphi\right)=\\
&=8\left(\dfrac{\Re[\uphiff^*\duphiff\left(N\! + \e^{2i(\gamma_\varphi - \theta)}\sqrt{N^2\!+\!N}\right)]}{1+2\abs{(u_\varphi)_{11}}^2\left(N\!+\cos(2\gamma_\varphi-2\theta)\sqrt{N^2\!+N}\right)}\right)^2\!\!\!.
\end{align}
Moreover we can write $\uphiff = \e^{\ii\gamma_\varphi}\sqrt{P_\varphi}$, so that 
\begin{equation}
\uphiff^*\duphiff = \frac{\partial_\varphi P_\varphi}{2} +\ii  P_\varphi\partial_\varphi\gamma_\varphi.
\end{equation}
Since $P_\varphi$, $\gamma_\varphi$ and their derivatives are real, once we define the quantities
\begin{equation}
f(N):=N\left(1+\cos[2(\gamma_\varphi-\theta)]\sqrt{1+1/N} \right),
\end{equation}
\begin{equation}
h(N):=N\sin[2(\gamma_\varphi-\theta)]\sqrt{1+1/N},
\end{equation}
we easily obtain 
\begin{equation}
F(\varphi) = 2\left(\dfrac{(\partial_\varphi P_\varphi)f(N) -2P_\varphi(\partial_\varphi\gamma_\varphi) h(N)}{1+2P_\varphi f(N)}\right)^2
\end{equation}
as displayed in~\eqref{eq:Fisher0.1}.
\section{Analytic distribution of the pre-factor \eqref{eq:prefactor} in the Fisher information \eqref{eq:FisherPrefactorf} for generators with only two distinct eigenvalues}\label{app:explicitDist}
We will derive here the explicit form of the probability density function for the pre-factor ${\f = (U^\dagger G_\varphi U)_{11}^2}$ in \eqref{eq:prefactor} for a fixed generator $G_\varphi$ as $U$ is sampled from $\mathrm{U}(M)$ with the Haar measure. First of all, note that this distribution depends only on the eigenvalues of $G_\varphi$, which we denote with $g=(g_1,\dots,g_M)$, dropping the $\varphi$ subscript for notation simplicity. This can be seen using the spectral decomposition of $G_\varphi=V_\varphi^\dagger D_\varphi V_\varphi$, where $D_\varphi=\diag(g)$,  yielding:
\begin{equation}
	U^\dagger G_\varphi U=(V_\varphi U)^\dagger D_\varphi (V_\varphi U)\sd U^\dagger D_\varphi U,
\end{equation} 
where in the last step we used the invariance property of the Haar measure and we used the notation $\sd$ to say that the two random variables have the same distribution. In light of this remark, we have that
\begin{align}\notag
	\f&\sd	f(U,D_\varphi)\\ \notag
	&=(U^\dagger D_\varphi U)_{11}^2\\
	&=\left(\sum_{j=1}^M \abs{u_j}^2 g_j\right)^2,
	\label{eq:RandomVariablegj}
\end{align}
having defined the random vector $u=Ue_1$ obtained by the application of the random matrix $U\in\mathrm{U}(M)$ to the fixed basis vector $e_1=(1,\dots,0)^T\in\C^M$, where $\C^M$ denotes the set of $M$-tuples of complex numbers. We see that $f(U,D_\varphi)$ can be interpreted as a weighted average of the eigenvalues of $G_\varphi$ with random weights; these weights are given by the square modulus of the components of a random vector drawn from the unit sphere in $\C^M$ with the Haar measure. The distribution of this random variable can be quite complicated for a generic choice of the $G_\varphi$'s eigenvalues $g=(g_1,\dots,g_M)$. We will consider here the situation in which there are at most two distinct eigenvalues $g_1\geqslant g_2\geqslant 0$, i.e:
\begin{equation}
g=(\underbrace{g_1,\dots,g_1}_k, \underbrace{g_2,\dots,g_2}_{M-k})
\end{equation}
so that 
\begin{align}\notag
	f(U,D_\varphi)&=\left(g_1 \sum_{j=1}^k \abs{u_j}^2 +g_2 \sum_{j=k+1}^M \abs{u_j}^2\right)^2\\
	&=\left[(g_1-g_2) \sum_{j=1}^k \abs{u_j}^2 +g_2\right]^2
	\label{eq:RandomVariablejM}
\end{align}
where we used the normalization constraint
\begin{equation}
	\sum_{j=1}^M \abs{u_j}^2=1.
\end{equation}
In order to get the distribution of~\eqref{eq:RandomVariablejM}, let us first consider the random quantity $\tau(U)$ defined by the sum inside the brackets, namely:
\begin{equation}
	\tau(U)=\sum_{j=1}^k \abs{u_j}^2
\end{equation}
We start from the distribution $q(t)$ of $\tau(U)$, defined in such a way that $q(t)\d t$ is the probability to have
\begin{equation}\label{eq:kdimsphericalCap}
	t \leqslant \tau(U)= \sum_{j=1}^k \abs{u_j}^2 \leqslant t+\d t
\end{equation}
or, defining $x_{2j-1}:=\Re u_j$ and $x_{2j}:=\Im u_j$, the probability to have:
\begin{equation}\label{eq:kdimsphericalCapBis}
t \leqslant \sum_{j=1}^k (x_{2j-1}^2+ x_{2j}^2) \leqslant t+\d t.
\end{equation}
This probability can be interpreted as the geometrical surface of a $2k$-dimensional hyperspherical cap of a $(2M-1)$-dimensional hypersphere sitting in $\R^{2M}$. Using this interpretation, one then finds that~\cite{Neumann1929,vonNeumann2010}:
\begin{equation}\label{eq:vonNeumannDist}
	q(t)=\frac{(M-1)!}{(k-1)!(M-k-1)!}t^{k-1}(1-t)^{M-k-1}\chi_{[0,1]}(t),
\end{equation}
where 
\begin{equation}
	\chi_{[0,1]}(t)=\begin{cases}
	1 & 0\leqslant t\leqslant 1,\\
	0 &\text{otherwise}.
	\end{cases}
\end{equation}
Starting from the distribution~\eqref{eq:vonNeumannDist} of $\tau(U)$, the probability density function $p(x)$ of 
\begin{equation}
	f(U,\diag(g))=\left[(g_1-g_2)\tau(U)+g_2\right]^2,
\end{equation}
can be found with a change of variables to be:
\begin{equation}
	p(x)=\frac{1}{2\sqrt{x}\Delta g}q\left(\frac{\sqrt{x}-g_2}{\Delta g}\right),
\end{equation}
where $\Delta g:=g_1-g_2$. We then have explicitly:
\begin{equation}\label{eq:ThPDF}
	p(x)=C\frac{(g_1-\sqrt{x})^{M-k-1}(\sqrt{x}-g_2)^{k-1}}{\sqrt{x}}\chi_{[g_2,g_1]}(\sqrt{x})
\end{equation}
where $C$ is a normalization constant given by:
\begin{equation}
	C=\frac{1}{2(\Delta g)^{M-1}}\frac{(M-1)!}{(k-1)!(M-k-1)!}.
\end{equation}
This distribution is valid whenever $G_\varphi$ has only two distinct positive eigenvalues $g_1\geqslant g_2\geqslant 0$. Numerical results are compared with the probability density function~\eqref{eq:ThPDF} in \figurename{ \ref{fig:FIdist}}.

\section{Derivation of the average value \eqref{eq:avef} of the pre-factor in \eqref{eq:prefactor} for random unitary matrices U}\label{app:averagesU}
We collect here some results needed for the computation of averages over the unitary group which have been used in the main text. Denoting with $\mathcal{P}$ the Haar probability measure,  the average of a function $f:\mathrm{U}(M)\rightarrow \C$ is defined as 
\begin{equation}
	\mathbb{E}[f(U)]=\int f(U)\ud \mathcal{P}(U)
\end{equation}
whenever this integral is defined. In order to derive the results of this work we are interested only in the moments of the matrix elements $U_{ij}$, i.e. the averages of some powers of the matrix elements and their complex conjugates, which we give here in the following lemma.

\begin{lemma}[\cite{hiai2000semicircle}, Proposition 4.2.3]
	Denoting with $U_{ij}$ the matrix elements of a Haar distributed random unitary matrix $U\in\mathrm{U}(M)$, the only non-vanishing moments up to the fourth order are given by:
	\begin{align}
		 \mathbb{E}[\abs{U_{ij}}^2]&=\frac{1}{M}  &(1\leqslant i,j \leqslant M),\\
		 \mathbb{E}[\abs{U_{ij}}^4]&=\frac{2}{M(M+1)} &(1\leqslant i,j \leqslant M),\\ 	
		 \mathbb{E}[\abs{U_{ij}}^2 \abs{U_{kj}}^2]&=\frac{1}{M(M+1)} & (i\neq k),\\
		 \mathbb{E}[\abs{U_{ij}}^2 \abs{U_{il}}^2]&=\frac{1}{M(M+1)} & (j\neq l),\\	
		 \mathbb{E}[\abs{U_{ij}}^2\abs{U_{kl}}^2]&=\frac{1}{M^2-1} & (i\neq k,\ j\neq l),\\
		 \mathbb{E}[U_{ij} U_{kl} U_{il}^* U_{kj}^*]&=-\frac{1}{M(M^2-1)} & (i\neq k,\ j\neq l).
	\end{align}
\end{lemma}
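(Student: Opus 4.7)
The plan is to exploit invariance properties of the Haar measure as aggressively as possible, and only at the end resort to an explicit single moment computation to close the system.

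\textbf{Step 1 (identifying the vanishing moments).} I would first invoke invariance of $\mathcal{P}$ under $U \mapsto D_\phi U$ and $U \mapsto U D_\phi$, where $D_\phi = \diag(\e^{\ii\phi_1},\dots,\e^{\ii\phi_M})$ is diagonal unitary. Under $U \mapsto D_\phi U$, an entry $U_{ij}$ picks up the phase $\e^{\ii\phi_i}$ while $\bar U_{ij}$ picks up $\e^{-\ii\phi_i}$. Averaging a monomial $\prod_a U_{i_a j_a}\prod_b \bar U_{i'_b j'_b}$ over each $\phi_i$ independently forces the expectation to vanish unless, for every row index $r$, the number of $U$-factors with row $r$ equals the number of $\bar U$-factors with row $r$; the symmetric statement for column indices follows from right multiplication by $D_\phi$. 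Applied up to degree four, this instantly cuts the list of possibly nonzero moments down to the five listed in the lemma.

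\textbf{Step 2 (second moment).} Permutation matrices $P,Q$ are unitary, so $U \sd PUQ$, and hence $\E[|U_{ij}|^2]$ depends on neither $i$ nor $j$. Taking the expectation of the row normalisation $\sum_j |U_{ij}|^2 = 1$ gives $M\E[|U_{ij}|^2]=1$ and thus $\E[|U_{ij}|^2]=1/M$.

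\textbf{Step 3 (fourth moments).} By the same permutation symmetry, I would name five unknowns $\alpha=\E[|U_{ij}|^4]$, $\beta_1=\E[|U_{ij}|^2|U_{kj}|^2]$ with $i\neq k$, $\beta_2=\E[|U_{ij}|^2|U_{il}|^2]$ with $j\neq l$, $\gamma=\E[|U_{ij}|^2|U_{kl}|^2]$ with $i\neq k,\,j\neq l$, and $\delta=\E[U_{ij}U_{kl}\bar U_{il}\bar U_{kj}]$. Since $U\sd U^T$ (transposition preserves the Haar measure), $\beta_1=\beta_2=:\beta$. The remaining four unknowns will be pinned down by one direct computation plus three linear relations coming from column/row orthonormality. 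Concretely, I would compute $\alpha$ directly by using that any column of $U$ is a uniformly random unit vector in $\C^M$, represented as $g/\Norm{g}$ with $g=(g_1,\dots,g_M)$ i.i.d. standard complex Gaussian; then $|U_{i1}|^2$ has a $\mathrm{Beta}(1,M-1)$ distribution, giving $\alpha = 2/(M(M+1))$. Next, squaring $\sum_j |U_{ij}|^2=1$ and taking expectations gives $\alpha+(M-1)\beta=1/M$, which yields $\beta=1/(M(M+1))$. The orthogonality $\sum_i U_{ij}\bar U_{il}=0$ for $j\neq l$, squared in modulus and averaged, gives $\beta+(M-1)\delta=0$ and hence $\delta=-1/(M(M^2-1))$. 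Finally, averaging $\bigl(\sum_j |U_{ij}|^2\bigr)\bigl(\sum_l |U_{kl}|^2\bigr)=1$ for $i\neq k$ yields $\beta+(M-1)\gamma=1/M$, so $\gamma=1/(M^2-1)$.

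\textbf{Main obstacle.} The phase- and permutation-invariance arguments are immediate, and the normalisation/orthogonality relations produce three independent linear equations among $\alpha,\beta,\gamma,\delta$; but these four unknowns require one extra input to be solvable. The real work is therefore obtaining $\alpha=2/(M(M+1))$ cleanly. The Gaussian-ratio/Beta-distribution computation above is the quickest elementary route; alternatively one can invoke the degree-two Weingarten formula, which inverts the $2\times 2$ Gram matrix of $S_2$ to recover the pair $(\alpha,\beta)$ in one step, but this requires setting up extra machinery. Once $\alpha$ is in hand, the remaining moments fall out mechanically from the linear relations above.
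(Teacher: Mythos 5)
Your proof is correct, and it is genuinely different from what the paper does: the paper gives no proof of this lemma, simply citing Proposition 4.2.3 of Hiai and Petz and, separately, recording the compact degree-two Weingarten formulas of Ref.~\cite{puchala2017symbolic}, whose standard derivations rest on the representation theory of $S_2$ acting on $(\C^M)^{\otimes 2}$. Your route is elementary and self-contained. The phase-invariance argument under $U\mapsto D_\phi U$ and $U\mapsto U D_\phi$ correctly isolates the listed candidates as the only possibly non-vanishing moments (and, since odd total degree can never balance the row multiplicities, it also disposes of the first- and third-order moments, which is needed to justify the word ``only'' in the statement). Permutation invariance and $U\sd U^T$ reduce the fourth-order moments to four unknowns, and your three linear relations $\alpha+(M-1)\beta=1/M$, $\beta+(M-1)\delta=0$ and $\beta+(M-1)\gamma=1/M$, obtained from row normalisation and column orthogonality, do reproduce the stated values once $\alpha=2/(M(M+1))$ is supplied. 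That single nontrivial input is legitimate: the first column of a Haar unitary is uniform on the unit sphere of $\C^M$, so $\abs{U_{i1}}^2\sim\mathrm{Beta}(1,M-1)$ with second moment $2/(M(M+1))$; this is in fact the $k=1$ case of the density $q(t)$ used in Appendix~\ref{app:explicitDist}, so your argument meshes naturally with the rest of the paper. What your approach buys is a proof with no Weingarten machinery; what the citation buys is uniformity over all higher moments. The only caveat worth stating explicitly is that $M\geqslant 2$ is implicitly assumed wherever $i\neq k$ or $j\neq l$ appears.
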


The results gathered in the previous lemma can be expressed with two compact formulas~\cite{puchala2017symbolic} as:
\begin{align}\label{eq:UHaarAverage1}
	&\mathbb{E}[U_{ij}U_{kl}^*]=\frac{\delta_{ik}\delta_{jl}}{M}\\
	&\notag \mathbb{E}[U_{ij}U_{kl}U_{mn}^* U_{pq}^*]=\frac{\delta_{im}\delta_{jn}\delta_{kp}\delta_{lq}+\delta_{ip}\delta_{jq}\delta_{km}\delta_{ln}}{M^2-1}+\\&\qquad \qquad-\frac{\delta_{im}\delta_{jq}\delta_{kp}\delta_{ln}+\delta_{ip}\delta_{jn}\delta_{km}\delta_{lq}}{M(M^2-1)}
	\label{eq:UHaarAverage2}
\end{align}
These formulas allow us to prove the following lemma, which has been used to derive the averages~\eqref{eq:avef} and~\eqref{eq:lowbound} of the main text.
\begin{lemma}
	Given a generic $M\times M$ complex matrix $A$, the following results hold:
	\begin{equation}\label{eq:aveUAU}
	\mathbb{E}[(U^\dagger A U)_{ij}]=\frac{\Tr(A)}{M}\delta_{ij}
	\end{equation}
	\begin{equation}\label{eq:aveUAUij2}
	\mathbb{E}[(U^\dagger A U)_{ij}^2]=\frac{\Tr(A)^2+\Tr(A^2)}{M(M+1)}\delta_{ij}
	\end{equation}
\end{lemma}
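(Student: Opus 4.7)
The plan is to prove both identities by direct expansion of the matrix elements of $U^\dagger A U$ as sums over entries of $U$ and $U^*$, then applying the Haar-moment formulas~\eqref{eq:UHaarAverage1} and~\eqref{eq:UHaarAverage2} term by term. Concretely I write
\begin{equation}
(U^\dagger A U)_{ij} = \sum_{k,l=1}^{M} U_{ki}^{*} A_{kl} U_{lj},
\end{equation}
and push the expectation inside each sum (everything is a finite sum of bounded functions, so this is legitimate).

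For the first identity, the expectation of a single summand involves $\mathbb{E}[U_{ki}^{*}U_{lj}]$, which by~\eqref{eq:UHaarAverage1} equals $\delta_{kl}\delta_{ij}/M$. Summing over $k,l$ collapses the $A$ tensor to its trace and reproduces~\eqref{eq:aveUAU} immediately.

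For the second identity, I square the expansion to obtain a quadruple sum
\begin{equation}
\mathbb{E}[(U^\dagger A U)_{ij}^2] = \sum_{k,l,m,n} A_{kl}A_{mn}\, \mathbb{E}[U_{ki}^{*}U_{lj}U_{mi}^{*}U_{nj}],
\end{equation}
and match the six dummy indices of the four-$U$ moment in~\eqref{eq:UHaarAverage2} to my $k,l,m,n,i,j$. This substitution is the one real source of potential error: with the dictionary (first $U$)$\to (l,j)$, (second $U$)$\to (n,j)$, (first $U^*$)$\to (k,i)$, (second $U^*$)$\to (m,i)$, each of the four Kronecker-product terms in~\eqref{eq:UHaarAverage2} pairs the $j$-indices into $\delta_{ij}\delta_{ij}=\delta_{ij}$ and leaves the remaining $\delta$'s in one of two patterns, $\delta_{lk}\delta_{nm}$ or $\delta_{lm}\delta_{nk}$. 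The two contributions from the positive piece turn out to equal, in the same order, the two from the negative piece, so the prefactors combine as $1/(M^2-1) - 1/(M(M^2-1)) = 1/(M(M+1))$.

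The final step is just trace identification. The pattern $\delta_{lk}\delta_{nm}$ contracts $A_{kl}A_{mn}$ to $\mathrm{Tr}(A)^2$, and the pattern $\delta_{lm}\delta_{nk}$ contracts it to $\mathrm{Tr}(A^2)$, yielding
\begin{equation}
\mathbb{E}[(U^\dagger A U)_{ij}^2] = \frac{\delta_{ij}}{M(M+1)}\bigl(\mathrm{Tr}(A)^2+\mathrm{Tr}(A^2)\bigr),
\end{equation}
which is~\eqref{eq:aveUAUij2}. The main obstacle is purely notational: keeping the four-index dictionary consistent throughout the substitution into~\eqref{eq:UHaarAverage2}; no analytic subtlety or nontrivial random-matrix input is required beyond the two moment identities already stated.
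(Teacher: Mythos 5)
Your proposal is correct and follows essentially the same route as the paper's own proof: expand $(U^\dagger A U)_{ij}$ into sums over matrix entries, push the expectation inside, and apply the second- and fourth-moment Haar formulas~\eqref{eq:UHaarAverage1} and~\eqref{eq:UHaarAverage2}, with the index dictionary and the combination $1/(M^2-1)-1/(M(M^2-1))=1/(M(M+1))$ worked out exactly as in the paper. Nothing is missing.
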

\begin{proof}
	Equation~\eqref{eq:aveUAU} can be proved using~\eqref{eq:UHaarAverage1}:
	\begin{align}\notag
	\mathbb{E}[(U^\dagger A U)_{ij}]&=\sum_{k,l}\mathbb{E}[U^\dagger_{ik}A_{kl}U_{lj}]\\
	\notag
	&=\sum_{k,l}A_{kl} \mathbb{E}[U^*_{ki}U_{lj}]\\
	\notag
	&=\sum_{k,l}A_{kl} \frac{\delta_{kl}\delta_{ij}}{M}\\
	&=\frac{\Tr(A)}{M}\delta_{ij},
	\end{align}
	while equation~\eqref{eq:aveUAUij2} can be proved using~\eqref{eq:UHaarAverage2}:
	\begin{align}\notag
	&\mathbb{E}[(U^\dagger A U)_{ij}^2]=\\
	\notag
	&\qquad=\sum_{k,l,m,n} \mathbb{E}[U^\dagger_{ik}A_{kl}U_{lj} U^\dagger_{im}A_{mn}U_{nj}]\\
	\notag
	&\qquad=\sum_{k,l,m,n}A_{kl}A_{mn} \mathbb{E}[U_{lj} U_{nj}U_{mi}^*U_{ki}^*]\\\notag
	&\qquad=[\Tr(A^2)+\Tr(A)^2]\left(\frac{1}{M^2-1}-\frac{1}{M(M^2-1)}\right)\delta_{ij}\\
	&\qquad=\frac{\Tr(A^2)+\Tr(A)^2}{M(M+1)}\delta_{ij}
	\end{align}
\end{proof}
For $A=G_\varphi$ and $i=j=1$, the expressions \eqref{eq:aveUAU} and \eqref{eq:aveUAUij2} reduces to the equalities in \eqref{eq:lowbound} and to \eqref{eq:avef}, respectively.
\section{Derivation of the typicality result in ~\eqref{eq:concentration}}\label{app:concentration}
In this appendix we will show how to derive equation~\eqref{eq:concentration} starting from a standard result on concentration of measure in high-dimensional spaces known as Levy's Lemma, which we report in the following theorem for the sake of completeness.

\begin{thm}\label{thm:Levy}
	Let $f:\mathbb{S}^{n-1}\rightarrow \R$ be a function defined over the unit euclidean sphere
	\begin{equation}
	\mathbb{S}^{n-1}=\bigg\{ x\in\R^n\bigg|\sum_{k=1}^n x_k^2=1\bigg\}
	\end{equation}
	endowed with the invariant Haar probability measure $\mathcal{P}$. Denote with  $L$ the Lipschitz constant of the function, i.e. the minimum $L$ such that
	\begin{equation}
	\abs{f(x)-f(y)}\leqslant L \Norm{x-y}_2,
	\end{equation}
	for all $x,y\in\mathbb{S}^{n-1}$, where $\Norm{x}_2=\sqrt{\sum_{k=1}^n x_k^2}$ is the Euclidean norm.
	Then:
	\begin{equation}\label{eq:LevyCor}
	\mathcal{P}(\abs{f-\E[f]}\geqslant \varepsilon )\leqslant 2 \mathrm{e}^{-\frac{n \varepsilon^2}{CL^2}},
	\end{equation}
	where $C$ is some positive constant which can be taken to be $C=9\pi^3$~\cite{facchi2017quantum,popescu2006entanglement}.
\end{thm}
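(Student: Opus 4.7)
The plan is to obtain the stated concentration bound by a classical two-step reduction: first establish concentration of $f$ around one of its medians using Levy's spherical isoperimetric inequality together with the Lipschitz hypothesis, and then trade the median for the mean. For the first step, I would invoke as a black box the isoperimetric inequality on $\mathbb{S}^{n-1}$ in the form that, for any measurable $A\subset \mathbb{S}^{n-1}$ with $\mathcal{P}(A)\geqslant 1/2$ and any $\eta>0$, the geodesic $\eta$-neighborhood $A_\eta$ satisfies $\mathcal{P}(A_\eta)\geqslant 1-\exp(-c_0 n\eta^2)$ for an absolute constant $c_0>0$; this is the statement that spherical caps minimise enlargement among sets of fixed measure. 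Let $m_f$ be a median of $f$ and set $A_\pm = \{x\in\mathbb{S}^{n-1}:\pm(f(x)-m_f)\leqslant 0\}$, each of which has probability at least $1/2$. The $L$-Lipschitz hypothesis yields $\pm(f(x)-m_f)\leqslant \varepsilon$ whenever $x\in(A_\pm)_{\varepsilon/L}$, so the isoperimetric bound applied to $A_\pm$ with $\eta=\varepsilon/L$ gives
\begin{equation}
\mathcal{P}\!\left(|f-m_f|\geqslant \varepsilon\right)\leqslant 2\exp\!\left(-\frac{c_0 n\varepsilon^2}{L^2}\right).
\end{equation}

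For the second step, the mean and the median are automatically close, since
\begin{equation}
|\E[f]-m_f|\leqslant \E[|f-m_f|]=\int_0^\infty \mathcal{P}(|f-m_f|\geqslant t)\ud t \leqslant L\sqrt{\pi/(c_0 n)},
\end{equation}
so the mean-median gap is of order $L/\sqrt{n}$. Using the inclusion $\{|f-\E[f]|\geqslant \varepsilon\}\subset \{|f-m_f|\geqslant \varepsilon-|\E[f]-m_f|\}$ and absorbing the $O(L/\sqrt{n})$ shift into a worsened multiplicative constant produces a bound of the form $\mathcal{P}(|f-\E[f]|\geqslant \varepsilon)\leqslant 2\exp(-n\varepsilon^2/(CL^2))$, which is the claim.

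The main obstacle I expect is not the conceptual reduction, which is textbook, but pinning down the specific numerical value $C=9\pi^3$. Reaching this exact constant typically requires replacing the bare isoperimetric route by the log-Sobolev inequality on the sphere (or an equivalent Bakry-Emery / heat-semigroup argument on $\mathbb{S}^{n-1}$), which produces a Gaussian moment generating function bound $\E[\e^{\lambda(f-\E[f])}]\leqslant \exp(\lambda^2 L^2/(2c_{\mathrm{LS}}(n-1)))$ with an explicitly computable spectral constant, followed by an optimised Chernoff step. I would defer the detailed constant-tracking to the cited references and state the proof conditionally on whichever variant of the argument produces exactly $C=9\pi^3$.
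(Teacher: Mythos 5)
The paper does not actually prove this statement: Theorem~\ref{thm:Levy} is quoted verbatim as a known result (``which we report in the following theorem for the sake of completeness''), with the constant $C=9\pi^3$ taken from the cited references \cite{facchi2017quantum,popescu2006entanglement}. Your proposal therefore supplies more than the paper does, and the route you sketch is the standard and correct one: spherical isoperimetry gives two-sided concentration around a median $m_f$ at rate $\exp(-c_0 n\varepsilon^2/L^2)$, the tail integral bounds the mean--median gap by $O(L/\sqrt{n})$, and the gap is absorbed into a worse constant $C$. Two minor points are worth making explicit if you write this out. First, the Lipschitz hypothesis in the theorem is stated for the Euclidean (chordal) norm while the isoperimetric enlargement is usually taken in geodesic distance; since the geodesic distance dominates the chordal one on $\mathbb{S}^{n-1}$, the implication $x\in(A_\pm)_{\varepsilon/L}\Rightarrow \pm(f(x)-m_f)\leqslant\varepsilon$ still holds with the same $L$, so nothing is lost. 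Second, the inclusion $\{\abs{f-\E[f]}\geqslant\varepsilon\}\subset\{\abs{f-m_f}\geqslant\varepsilon-\abs{\E[f]-m_f}\}$ only yields a nontrivial bound when $\varepsilon$ exceeds the mean--median gap; for smaller $\varepsilon$ one must check that the claimed right-hand side already exceeds $1$, which is exactly what fixes the admissible size of $C$ and is the step where the specific value $9\pi^3$ gets determined. Since the paper itself outsources that bookkeeping to the references, your decision to do the same is consistent with the source; your argument is otherwise complete and correct in outline.
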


In order to apply Theorem \ref{thm:Levy} to our case, we need to compute the Lipschitz constant associated with the pre-factor~\eqref{eq:prefactor}. First, note that $\f$ can be interpreted as a function defined on a real unit sphere. In fact, it can be written as
\begin{align}
\f
&=\left(\sum_{j=1}^M \abs{u_j}^2 g_j \right)^2.
\label{eq:F(u)}
\end{align}
where $u$ is a \textit{complex} vector on the unit sphere, given by $u=Ue$ with $e=(1,0,\dots,0)^T\in\C^M$. Since only the squared moduli $\abs{u_j}^2$ appear in this expression, we can recast the problem in terms of a real vector $x\in \R^{2M}$ whose components are defined by:
\begin{equation}\label{map_u_to_x}
x_{2j-1}=\Re u_j, \qquad x_{2j}=\Im u_j, \qquad j=1,\dots,M.
\end{equation}
The normalization constraint $\sum_{j=1}^M \abs{u_j}^2=1$ becomes 
\begin{equation}
\sum_{j=1}^{2M} x_j^2=1,
\end{equation} 
so that $x\in\mathbb{S}^{2M-1}$, the unit sphere sitting inside $\R^{2M}$. We see then that the random factor in equation~\eqref{eq:F(u)} can be envisioned as a function defined over the unit sphere $\mathbb{S}^{2M-1}$:
\begin{align}\notag
\left(\sum_{j=1}^{M} \abs{u_j}^2g_j\right)^2&=\left(\sum_{j=1}^{M} (x_{2j-1}^2+x_{2j}^2)g_j\right)^2\\
&=\left(x^T \widetilde{G} x\right)^2=:f(x),
\end{align}
where we have defined the diagonal matrix $\widetilde{G}=\diag(\widetilde{g})$ with $\widetilde{g}=(g_1,g_1,\dots,g_M,g_M)\in\R^{2M}$.
In order to apply Theorem \ref{thm:Levy} we need to estimate the Lipschitz constant $L$ of the function $f$; to this aim, we  evaluate the gradient of $f$, which is given by:
\begin{align}
\nabla f(x)
&=4(x^T\widetilde{G}x) \widetilde{G}x.
\end{align}
The Lipschitz constant for $f$ can be then obtained as:
\begin{equation}\label{eq:l}
L=\max_{x\in\mathbb{S}^{2M-1}} \Norm{\nabla f(x)}_2=4\Norm{G_\varphi}^2.
\end{equation}
To see this, note simply that:
\begin{align}\notag
\Norm{\nabla f(x)}_2&=\sqrt{[\nabla f(x)]^T [\nabla f(x)]}\\\notag
&=4|x^T \widetilde{G} x|\sqrt{x^T \widetilde{G}^2 x}\\\notag
&\leqslant 4 \|\widetilde{G}\|^2\\
&= 4 \Norm{G_\varphi}^2
\label{eq:max}
\end{align}
where in the inequality we used the fact that $|x^T\widetilde{G} x| \leqslant \|\widetilde{G}\|$ and $x^T \widetilde{G}^2 x=\|\widetilde{G}x\|^2\leqslant \|\widetilde{G}\|^2$, while in the last equality we used the fact that $\|\widetilde{G}\|=\Norm{G_\varphi}$. The value $\Norm{\nabla f(x)}_2=4\Norm{G_\varphi}^2$ can be obtained with $x=(1,0\dots,0)^T$, which together with~\eqref{eq:max} proves~\eqref{eq:l}. Applying Theorem \ref{thm:Levy} to our case with $n=2M$ and $L=4\Norm{G_\varphi}^2$ finally yields equation~\eqref{eq:concentration}.

\nocite{*}
\bibliography{references}

\begin{thebibliography}{47}%
\makeatletter
\providecommand \@ifxundefined [1]{%
 \@ifx{#1\undefined}
}%
\providecommand \@ifnum [1]{%
 \ifnum #1\expandafter \@firstoftwo
 \else \expandafter \@secondoftwo
 \fi
}%
\providecommand \@ifx [1]{%
 \ifx #1\expandafter \@firstoftwo
 \else \expandafter \@secondoftwo
 \fi
}%
\providecommand \natexlab [1]{#1}%
\providecommand \enquote  [1]{``#1''}%
\providecommand \bibnamefont  [1]{#1}%
\providecommand \bibfnamefont [1]{#1}%
\providecommand \citenamefont [1]{#1}%
\providecommand \href@noop [0]{\@secondoftwo}%
\providecommand \href [0]{\begingroup \@sanitize@url \@href}%
\providecommand \@href[1]{\@@startlink{#1}\@@href}%
\providecommand \@@href[1]{\endgroup#1\@@endlink}%
\providecommand \@sanitize@url [0]{\catcode `\\12\catcode `\$12\catcode
  `\&12\catcode `\#12\catcode `\^12\catcode `\_12\catcode `\%12\relax}%
\providecommand \@@startlink[1]{}%
\providecommand \@@endlink[0]{}%
\providecommand \url  [0]{\begingroup\@sanitize@url \@url }%
\providecommand \@url [1]{\endgroup\@href {#1}{\urlprefix }}%
\providecommand \urlprefix  [0]{URL }%
\providecommand \Eprint [0]{\href }%
\providecommand \doibase [0]{http://dx.doi.org/}%
\providecommand \selectlanguage [0]{\@gobble}%
\providecommand \bibinfo  [0]{\@secondoftwo}%
\providecommand \bibfield  [0]{\@secondoftwo}%
\providecommand \translation [1]{[#1]}%
\providecommand \BibitemOpen [0]{}%
\providecommand \bibitemStop [0]{}%
\providecommand \bibitemNoStop [0]{.\EOS\space}%
\providecommand \EOS [0]{\spacefactor3000\relax}%
\providecommand \BibitemShut  [1]{\csname bibitem#1\endcsname}%
\let\auto@bib@innerbib\@empty
\bibitem [{\citenamefont {Giovannetti}\ \emph {et~al.}(2004)\citenamefont
  {Giovannetti}, \citenamefont {Lloyd},\ and\ \citenamefont
  {Maccone}}]{Giovannetti2004}%
  \BibitemOpen
  \bibfield  {author} {\bibinfo {author} {\bibfnamefont {Vittorio}\
  \bibnamefont {Giovannetti}}, \bibinfo {author} {\bibfnamefont {Seth}\
  \bibnamefont {Lloyd}}, \ and\ \bibinfo {author} {\bibfnamefont {Lorenzo}\
  \bibnamefont {Maccone}},\ }\bibfield  {title} {\enquote {\bibinfo {title}
  {Quantum-enhanced measurements: Beating the standard quantum limit},}\ }\href
  {\doibase 10.1126/science.1104149} {\bibfield  {journal} {\bibinfo  {journal}
  {Science}\ }\textbf {\bibinfo {volume} {306}},\ \bibinfo {pages} {1330--1336}
  (\bibinfo {year} {2004})}\BibitemShut {NoStop}%
\bibitem [{\citenamefont {Giovannetti}\ \emph {et~al.}(2006)\citenamefont
  {Giovannetti}, \citenamefont {Lloyd},\ and\ \citenamefont
  {Maccone}}]{Giovannetti2006}%
  \BibitemOpen
  \bibfield  {author} {\bibinfo {author} {\bibfnamefont {Vittorio}\
  \bibnamefont {Giovannetti}}, \bibinfo {author} {\bibfnamefont {Seth}\
  \bibnamefont {Lloyd}}, \ and\ \bibinfo {author} {\bibfnamefont {Lorenzo}\
  \bibnamefont {Maccone}},\ }\bibfield  {title} {\enquote {\bibinfo {title}
  {Quantum metrology},}\ }\href {\doibase 10.1103/PhysRevLett.96.010401}
  {\bibfield  {journal} {\bibinfo  {journal} {Phys. Rev. Lett.}\ }\textbf
  {\bibinfo {volume} {96}},\ \bibinfo {pages} {010401} (\bibinfo {year}
  {2006})}\BibitemShut {NoStop}%
\bibitem [{\citenamefont {Dowling}(2008)}]{Dowling2008}%
  \BibitemOpen
  \bibfield  {author} {\bibinfo {author} {\bibfnamefont {Jonathan~P.}\
  \bibnamefont {Dowling}},\ }\bibfield  {title} {\enquote {\bibinfo {title}
  {Quantum optical metrology – the lowdown on high-n00n states},}\ }\href
  {\doibase 10.1080/00107510802091298} {\bibfield  {journal} {\bibinfo
  {journal} {Contemporary Physics}\ }\textbf {\bibinfo {volume} {49}},\
  \bibinfo {pages} {125--143} (\bibinfo {year} {2008})}\BibitemShut {NoStop}%
\bibitem [{\citenamefont {Giovannetti}\ \emph {et~al.}(2011)\citenamefont
  {Giovannetti}, \citenamefont {Lloyd},\ and\ \citenamefont
  {Maccone}}]{Giovannetti2011}%
  \BibitemOpen
  \bibfield  {author} {\bibinfo {author} {\bibfnamefont {Vittorio}\
  \bibnamefont {Giovannetti}}, \bibinfo {author} {\bibfnamefont {Seth}\
  \bibnamefont {Lloyd}}, \ and\ \bibinfo {author} {\bibfnamefont {Lorenzo}\
  \bibnamefont {Maccone}},\ }\bibfield  {title} {\enquote {\bibinfo {title}
  {Advances in quantum metrology},}\ }\href {\doibase 10.1038/nphoton.2011.35}
  {\bibfield  {journal} {\bibinfo  {journal} {Nature Photonics}\ }\textbf
  {\bibinfo {volume} {5}},\ \bibinfo {pages} {010401} (\bibinfo {year}
  {2011})}\BibitemShut {NoStop}%
\bibitem [{\citenamefont {{Dowling}}\ and\ \citenamefont
  {{Seshadreesan}}(2015)}]{Dowling2015}%
  \BibitemOpen
  \bibfield  {author} {\bibinfo {author} {\bibfnamefont {J.~P.}\ \bibnamefont
  {{Dowling}}}\ and\ \bibinfo {author} {\bibfnamefont {K.~P.}\ \bibnamefont
  {{Seshadreesan}}},\ }\bibfield  {title} {\enquote {\bibinfo {title} {Quantum
  optical technologies for metrology, sensing, and imaging},}\ }\href {\doibase
  10.1109/JLT.2014.2386795} {\bibfield  {journal} {\bibinfo  {journal} {Journal
  of Lightwave Technology}\ }\textbf {\bibinfo {volume} {33}},\ \bibinfo
  {pages} {2359--2370} (\bibinfo {year} {2015})}\BibitemShut {NoStop}%
\bibitem [{\citenamefont {Bondurant}\ and\ \citenamefont
  {Shapiro}(1984)}]{Shapiro84}%
  \BibitemOpen
  \bibfield  {author} {\bibinfo {author} {\bibfnamefont {Roy~S.}\ \bibnamefont
  {Bondurant}}\ and\ \bibinfo {author} {\bibfnamefont {Jeffrey~H.}\
  \bibnamefont {Shapiro}},\ }\bibfield  {title} {\enquote {\bibinfo {title}
  {Squeezed states in phase-sensing interferometers},}\ }\href {\doibase
  10.1103/PhysRevD.30.2548} {\bibfield  {journal} {\bibinfo  {journal} {Phys.
  Rev. D}\ }\textbf {\bibinfo {volume} {30}},\ \bibinfo {pages} {2548--2556}
  (\bibinfo {year} {1984})}\BibitemShut {NoStop}%
\bibitem [{\citenamefont {Wineland}\ \emph {et~al.}(1992)\citenamefont
  {Wineland}, \citenamefont {Bollinger}, \citenamefont {Itano}, \citenamefont
  {Moore},\ and\ \citenamefont {Heinzen}}]{Wineland92}%
  \BibitemOpen
  \bibfield  {author} {\bibinfo {author} {\bibfnamefont {D.~J.}\ \bibnamefont
  {Wineland}}, \bibinfo {author} {\bibfnamefont {J.~J.}\ \bibnamefont
  {Bollinger}}, \bibinfo {author} {\bibfnamefont {W.~M.}\ \bibnamefont
  {Itano}}, \bibinfo {author} {\bibfnamefont {F.~L.}\ \bibnamefont {Moore}}, \
  and\ \bibinfo {author} {\bibfnamefont {D.~J.}\ \bibnamefont {Heinzen}},\
  }\bibfield  {title} {\enquote {\bibinfo {title} {Spin squeezing and reduced
  quantum noise in spectroscopy},}\ }\href {\doibase 10.1103/PhysRevA.46.R6797}
  {\bibfield  {journal} {\bibinfo  {journal} {Phys. Rev. A}\ }\textbf {\bibinfo
  {volume} {46}},\ \bibinfo {pages} {R6797--R6800} (\bibinfo {year}
  {1992})}\BibitemShut {NoStop}%
\bibitem [{\citenamefont {{Maccone}}\ and\ \citenamefont
  {{Riccardi}}(2019)}]{Maccone2019}%
  \BibitemOpen
  \bibfield  {author} {\bibinfo {author} {\bibfnamefont {Lorenzo}\ \bibnamefont
  {{Maccone}}}\ and\ \bibinfo {author} {\bibfnamefont {Alberto}\ \bibnamefont
  {{Riccardi}}},\ }\bibfield  {title} {\enquote {\bibinfo {title} {{Squeezing
  metrology}},}\ }\href@noop {} {\ ,\ \bibinfo {pages} {arXiv:1901.07482}
  (\bibinfo {year} {2019})},\ \Eprint {http://arxiv.org/abs/1901.07482}
  {arXiv:1901.07482 [quant-ph]} \BibitemShut {NoStop}%
\bibitem [{\citenamefont {Helstrom}(1969)}]{Helstrom1969}%
  \BibitemOpen
  \bibfield  {author} {\bibinfo {author} {\bibfnamefont {Carl~W.}\ \bibnamefont
  {Helstrom}},\ }\bibfield  {title} {\enquote {\bibinfo {title} {Quantum
  detection and estimation theory},}\ }\href {\doibase 10.1007/BF01007479}
  {\bibfield  {journal} {\bibinfo  {journal} {Journal of Statistical Physics}\
  }\textbf {\bibinfo {volume} {1}},\ \bibinfo {pages} {231--252} (\bibinfo
  {year} {1969})}\BibitemShut {NoStop}%
\bibitem [{\citenamefont {Braunstein}\ and\ \citenamefont
  {Caves}(1994)}]{PhysRevLett.72.3439}%
  \BibitemOpen
  \bibfield  {author} {\bibinfo {author} {\bibfnamefont {Samuel~L.}\
  \bibnamefont {Braunstein}}\ and\ \bibinfo {author} {\bibfnamefont
  {Carlton~M.}\ \bibnamefont {Caves}},\ }\bibfield  {title} {\enquote {\bibinfo
  {title} {Statistical distance and the geometry of quantum states},}\ }\href
  {\doibase 10.1103/PhysRevLett.72.3439} {\bibfield  {journal} {\bibinfo
  {journal} {Phys. Rev. Lett.}\ }\textbf {\bibinfo {volume} {72}},\ \bibinfo
  {pages} {3439--3443} (\bibinfo {year} {1994})}\BibitemShut {NoStop}%
\bibitem [{\citenamefont {Seshadreesan}\ \emph {et~al.}(2011)\citenamefont
  {Seshadreesan}, \citenamefont {Anisimov}, \citenamefont {Lee},\ and\
  \citenamefont {Dowling}}]{Seshadreesan_2011}%
  \BibitemOpen
  \bibfield  {author} {\bibinfo {author} {\bibfnamefont {Kaushik~P}\
  \bibnamefont {Seshadreesan}}, \bibinfo {author} {\bibfnamefont {Petr~M}\
  \bibnamefont {Anisimov}}, \bibinfo {author} {\bibfnamefont {Hwang}\
  \bibnamefont {Lee}}, \ and\ \bibinfo {author} {\bibfnamefont {Jonathan~P}\
  \bibnamefont {Dowling}},\ }\bibfield  {title} {\enquote {\bibinfo {title}
  {Parity detection achieves the heisenberg limit in interferometry with
  coherent mixed with squeezed vacuum light},}\ }\href {\doibase
  10.1088/1367-2630/13/8/083026} {\bibfield  {journal} {\bibinfo  {journal}
  {New Journal of Physics}\ }\textbf {\bibinfo {volume} {13}},\ \bibinfo
  {pages} {083026} (\bibinfo {year} {2011})}\BibitemShut {NoStop}%
\bibitem [{\citenamefont {De~Pasquale}\ \emph {et~al.}(2015)\citenamefont
  {De~Pasquale}, \citenamefont {Facchi}, \citenamefont {Florio}, \citenamefont
  {Giovannetti}, \citenamefont {Matsuoka},\ and\ \citenamefont
  {Yuasa}}]{PhysRevA.92.042115}%
  \BibitemOpen
  \bibfield  {author} {\bibinfo {author} {\bibfnamefont {Antonella}\
  \bibnamefont {De~Pasquale}}, \bibinfo {author} {\bibfnamefont {Paolo}\
  \bibnamefont {Facchi}}, \bibinfo {author} {\bibfnamefont {Giuseppe}\
  \bibnamefont {Florio}}, \bibinfo {author} {\bibfnamefont {Vittorio}\
  \bibnamefont {Giovannetti}}, \bibinfo {author} {\bibfnamefont {Koji}\
  \bibnamefont {Matsuoka}}, \ and\ \bibinfo {author} {\bibfnamefont {Kazuya}\
  \bibnamefont {Yuasa}},\ }\bibfield  {title} {\enquote {\bibinfo {title}
  {Two-mode bosonic quantum metrology with number fluctuations},}\ }\href
  {\doibase 10.1103/PhysRevA.92.042115} {\bibfield  {journal} {\bibinfo
  {journal} {Phys. Rev. A}\ }\textbf {\bibinfo {volume} {92}},\ \bibinfo
  {pages} {042115} (\bibinfo {year} {2015})}\BibitemShut {NoStop}%
\bibitem [{\citenamefont {Uys}\ and\ \citenamefont
  {Meystre}(2007)}]{PhysRevA.76.013804}%
  \BibitemOpen
  \bibfield  {author} {\bibinfo {author} {\bibfnamefont {H.}~\bibnamefont
  {Uys}}\ and\ \bibinfo {author} {\bibfnamefont {P.}~\bibnamefont {Meystre}},\
  }\bibfield  {title} {\enquote {\bibinfo {title} {Quantum states for
  heisenberg-limited interferometry},}\ }\href {\doibase
  10.1103/PhysRevA.76.013804} {\bibfield  {journal} {\bibinfo  {journal} {Phys.
  Rev. A}\ }\textbf {\bibinfo {volume} {76}},\ \bibinfo {pages} {013804}
  (\bibinfo {year} {2007})}\BibitemShut {NoStop}%
\bibitem [{\citenamefont {Lang}\ and\ \citenamefont
  {Caves}(2014)}]{PhysRevA.90.025802}%
  \BibitemOpen
  \bibfield  {author} {\bibinfo {author} {\bibfnamefont {Matthias~D.}\
  \bibnamefont {Lang}}\ and\ \bibinfo {author} {\bibfnamefont {Carlton~M.}\
  \bibnamefont {Caves}},\ }\bibfield  {title} {\enquote {\bibinfo {title}
  {Optimal quantum-enhanced interferometry},}\ }\href {\doibase
  10.1103/PhysRevA.90.025802} {\bibfield  {journal} {\bibinfo  {journal} {Phys.
  Rev. A}\ }\textbf {\bibinfo {volume} {90}},\ \bibinfo {pages} {025802}
  (\bibinfo {year} {2014})}\BibitemShut {NoStop}%
\bibitem [{\citenamefont {Weedbrook}\ \emph {et~al.}(2012)\citenamefont
  {Weedbrook}, \citenamefont {Pirandola}, \citenamefont {Garc\'{\i}a-Patr\'on},
  \citenamefont {Cerf}, \citenamefont {Ralph}, \citenamefont {Shapiro},\ and\
  \citenamefont {Lloyd}}]{Weedbrook2012}%
  \BibitemOpen
  \bibfield  {author} {\bibinfo {author} {\bibfnamefont {Christian}\
  \bibnamefont {Weedbrook}}, \bibinfo {author} {\bibfnamefont {Stefano}\
  \bibnamefont {Pirandola}}, \bibinfo {author} {\bibfnamefont {Ra\'ul}\
  \bibnamefont {Garc\'{\i}a-Patr\'on}}, \bibinfo {author} {\bibfnamefont
  {Nicolas~J.}\ \bibnamefont {Cerf}}, \bibinfo {author} {\bibfnamefont
  {Timothy~C.}\ \bibnamefont {Ralph}}, \bibinfo {author} {\bibfnamefont
  {Jeffrey~H.}\ \bibnamefont {Shapiro}}, \ and\ \bibinfo {author}
  {\bibfnamefont {Seth}\ \bibnamefont {Lloyd}},\ }\bibfield  {title} {\enquote
  {\bibinfo {title} {Gaussian quantum information},}\ }\href {\doibase
  10.1103/RevModPhys.84.621} {\bibfield  {journal} {\bibinfo  {journal} {Rev.
  Mod. Phys.}\ }\textbf {\bibinfo {volume} {84}},\ \bibinfo {pages} {621--669}
  (\bibinfo {year} {2012})}\BibitemShut {NoStop}%
\bibitem [{\citenamefont {Adesso}\ \emph {et~al.}(2014)\citenamefont {Adesso},
  \citenamefont {Ragy},\ and\ \citenamefont {Lee}}]{Adesso2014}%
  \BibitemOpen
  \bibfield  {author} {\bibinfo {author} {\bibfnamefont {Gerardo}\ \bibnamefont
  {Adesso}}, \bibinfo {author} {\bibfnamefont {Sammy}\ \bibnamefont {Ragy}}, \
  and\ \bibinfo {author} {\bibfnamefont {Antony~R.}\ \bibnamefont {Lee}},\
  }\bibfield  {title} {\enquote {\bibinfo {title} {Continuous variable quantum
  information: Gaussian states and beyond},}\ }\href {\doibase
  10.1142/S1230161214400010} {\bibfield  {journal} {\bibinfo  {journal} {Open
  Systems \& Information Dynamics}\ }\textbf {\bibinfo {volume} {21}},\
  \bibinfo {pages} {1440001} (\bibinfo {year} {2014})},\ \Eprint
  {http://arxiv.org/abs/https://doi.org/10.1142/S1230161214400010}
  {https://doi.org/10.1142/S1230161214400010} \BibitemShut {NoStop}%
\bibitem [{\citenamefont {Ferraro}\ \emph {et~al.}(2005)\citenamefont
  {Ferraro}, \citenamefont {Olivares},\ and\ \citenamefont
  {Paris}}]{Paris2005}%
  \BibitemOpen
  \bibfield  {author} {\bibinfo {author} {\bibfnamefont {A.}~\bibnamefont
  {Ferraro}}, \bibinfo {author} {\bibfnamefont {S.}~\bibnamefont {Olivares}}, \
  and\ \bibinfo {author} {\bibfnamefont {{Matteo G. A.}}\ \bibnamefont
  {Paris}},\ }\href@noop {} {\emph {\bibinfo {title} {Gaussian States in
  Quantum Information}}},\ Napoli Series on physics and Astrophysics\ (\bibinfo
   {publisher} {Bibliopolis},\ \bibinfo {year} {2005})\BibitemShut {NoStop}%
\bibitem [{\citenamefont {Monras}(2006)}]{monras2006}%
  \BibitemOpen
  \bibfield  {author} {\bibinfo {author} {\bibfnamefont {Alex}\ \bibnamefont
  {Monras}},\ }\bibfield  {title} {\enquote {\bibinfo {title} {Optimal phase
  measurements with pure gaussian states},}\ }\href {\doibase
  10.1103/PhysRevA.73.033821} {\bibfield  {journal} {\bibinfo  {journal} {Phys.
  Rev. A}\ }\textbf {\bibinfo {volume} {73}},\ \bibinfo {pages} {033821}
  (\bibinfo {year} {2006})}\BibitemShut {NoStop}%
\bibitem [{\citenamefont {Matsubara}\ \emph {et~al.}(2019)\citenamefont
  {Matsubara}, \citenamefont {Facchi}, \citenamefont {Giovannetti},\ and\
  \citenamefont {Yuasa}}]{Matsubara_2019}%
  \BibitemOpen
  \bibfield  {author} {\bibinfo {author} {\bibfnamefont {Teruo}\ \bibnamefont
  {Matsubara}}, \bibinfo {author} {\bibfnamefont {Paolo}\ \bibnamefont
  {Facchi}}, \bibinfo {author} {\bibfnamefont {Vittorio}\ \bibnamefont
  {Giovannetti}}, \ and\ \bibinfo {author} {\bibfnamefont {Kazuya}\
  \bibnamefont {Yuasa}},\ }\bibfield  {title} {\enquote {\bibinfo {title}
  {Optimal gaussian metrology for generic multimode interferometric circuit},}\
  }\href {\doibase 10.1088/1367-2630/ab0604} {\bibfield  {journal} {\bibinfo
  {journal} {New Journal of Physics}\ }\textbf {\bibinfo {volume} {21}},\
  \bibinfo {pages} {033014} (\bibinfo {year} {2019})}\BibitemShut {NoStop}%
\bibitem [{\citenamefont {Oh}\ \emph {et~al.}(2019)\citenamefont {Oh},
  \citenamefont {Lee}, \citenamefont {Rockstuhl}, \citenamefont {Jeong},
  \citenamefont {Kim}, \citenamefont {Nha},\ and\ \citenamefont
  {Lee}}]{Oh2019}%
  \BibitemOpen
  \bibfield  {author} {\bibinfo {author} {\bibfnamefont {Changhun}\
  \bibnamefont {Oh}}, \bibinfo {author} {\bibfnamefont {Changhyoup}\
  \bibnamefont {Lee}}, \bibinfo {author} {\bibfnamefont {Carsten}\ \bibnamefont
  {Rockstuhl}}, \bibinfo {author} {\bibfnamefont {Hyunseok}\ \bibnamefont
  {Jeong}}, \bibinfo {author} {\bibfnamefont {Jaewan}\ \bibnamefont {Kim}},
  \bibinfo {author} {\bibfnamefont {Hyunchul}\ \bibnamefont {Nha}}, \ and\
  \bibinfo {author} {\bibfnamefont {Su-Yong}\ \bibnamefont {Lee}},\ }\bibfield
  {title} {\enquote {\bibinfo {title} {Optimal gaussian measurements for phase
  estimation in single-mode gaussian metrology},}\ }\href {\doibase
  10.1038/s41534-019-0124-4} {\bibfield  {journal} {\bibinfo  {journal} {npj
  Quantum Information}\ }\textbf {\bibinfo {volume} {5}},\ \bibinfo {pages}
  {10} (\bibinfo {year} {2019})}\BibitemShut {NoStop}%
\bibitem [{\citenamefont {Wiseman}(1995)}]{adaptiveHomodyne1995}%
  \BibitemOpen
  \bibfield  {author} {\bibinfo {author} {\bibfnamefont {H.~M.}\ \bibnamefont
  {Wiseman}},\ }\bibfield  {title} {\enquote {\bibinfo {title} {Adaptive phase
  measurements of optical modes: Going beyond the marginal $q$ distribution},}\
  }\href {\doibase 10.1103/PhysRevLett.75.4587} {\bibfield  {journal} {\bibinfo
   {journal} {Phys. Rev. Lett.}\ }\textbf {\bibinfo {volume} {75}},\ \bibinfo
  {pages} {4587--4590} (\bibinfo {year} {1995})}\BibitemShut {NoStop}%
\bibitem [{\citenamefont {Armen}\ \emph {et~al.}(2002)\citenamefont {Armen},
  \citenamefont {Au}, \citenamefont {Stockton}, \citenamefont {Doherty},\ and\
  \citenamefont {Mabuchi}}]{adaptiveHomodyne2002}%
  \BibitemOpen
  \bibfield  {author} {\bibinfo {author} {\bibfnamefont {Michael~A.}\
  \bibnamefont {Armen}}, \bibinfo {author} {\bibfnamefont {John~K.}\
  \bibnamefont {Au}}, \bibinfo {author} {\bibfnamefont {John~K.}\ \bibnamefont
  {Stockton}}, \bibinfo {author} {\bibfnamefont {Andrew~C.}\ \bibnamefont
  {Doherty}}, \ and\ \bibinfo {author} {\bibfnamefont {Hideo}\ \bibnamefont
  {Mabuchi}},\ }\bibfield  {title} {\enquote {\bibinfo {title} {Adaptive
  homodyne measurement of optical phase},}\ }\href {\doibase
  10.1103/PhysRevLett.89.133602} {\bibfield  {journal} {\bibinfo  {journal}
  {Phys. Rev. Lett.}\ }\textbf {\bibinfo {volume} {89}},\ \bibinfo {pages}
  {133602} (\bibinfo {year} {2002})}\BibitemShut {NoStop}%
\bibitem [{\citenamefont {Aspachs}\ \emph {et~al.}(2009)\citenamefont
  {Aspachs}, \citenamefont {Calsamiglia}, \citenamefont {Mu\~noz Tapia},\ and\
  \citenamefont {Bagan}}]{Aspachs2009}%
  \BibitemOpen
  \bibfield  {author} {\bibinfo {author} {\bibfnamefont {M.}~\bibnamefont
  {Aspachs}}, \bibinfo {author} {\bibfnamefont {J.}~\bibnamefont
  {Calsamiglia}}, \bibinfo {author} {\bibfnamefont {R.}~\bibnamefont {Mu\~noz
  Tapia}}, \ and\ \bibinfo {author} {\bibfnamefont {E.}~\bibnamefont {Bagan}},\
  }\bibfield  {title} {\enquote {\bibinfo {title} {Phase estimation for thermal
  gaussian states},}\ }\href {\doibase 10.1103/PhysRevA.79.033834} {\bibfield
  {journal} {\bibinfo  {journal} {Phys. Rev. A}\ }\textbf {\bibinfo {volume}
  {79}},\ \bibinfo {pages} {033834} (\bibinfo {year} {2009})}\BibitemShut
  {NoStop}%
\bibitem [{\citenamefont {Gatto}\ \emph {et~al.}(2019)\citenamefont {Gatto},
  \citenamefont {Facchi}, \citenamefont {Narducci},\ and\ \citenamefont
  {Tamma}}]{Gatto2019}%
  \BibitemOpen
  \bibfield  {author} {\bibinfo {author} {\bibfnamefont {Dario}\ \bibnamefont
  {Gatto}}, \bibinfo {author} {\bibfnamefont {Paolo}\ \bibnamefont {Facchi}},
  \bibinfo {author} {\bibfnamefont {Frank~A.}\ \bibnamefont {Narducci}}, \ and\
  \bibinfo {author} {\bibfnamefont {Vincenzo}\ \bibnamefont {Tamma}},\
  }\bibfield  {title} {\enquote {\bibinfo {title} {Distributed quantum
  metrology with a single squeezed-vacuum source},}\ }\href {\doibase
  10.1103/PhysRevResearch.1.032024} {\bibfield  {journal} {\bibinfo  {journal}
  {Phys. Rev. Research}\ }\textbf {\bibinfo {volume} {1}},\ \bibinfo {pages}
  {032024} (\bibinfo {year} {2019})}\BibitemShut {NoStop}%
\bibitem [{\citenamefont {Gatto}\ \emph {et~al.}(0)\citenamefont {Gatto},
  \citenamefont {Facchi},\ and\ \citenamefont {Tamma}}]{Gatto2020}%
  \BibitemOpen
  \bibfield  {author} {\bibinfo {author} {\bibfnamefont {Dario}\ \bibnamefont
  {Gatto}}, \bibinfo {author} {\bibfnamefont {Paolo}\ \bibnamefont {Facchi}}, \
  and\ \bibinfo {author} {\bibfnamefont {Vincenzo}\ \bibnamefont {Tamma}},\
  }\bibfield  {title} {\enquote {\bibinfo {title} {Phase space
  heisenberg-limited estimation of the average phase shift in a mach–zehnder
  interferometer},}\ }\href {\doibase 10.1142/S0219749919410193} {\bibfield
  {journal} {\bibinfo  {journal} {International Journal of Quantum
  Information}\ }\textbf {\bibinfo {volume} {0}},\ \bibinfo {pages} {1941019}
  (\bibinfo {year} {0})},\ \Eprint
  {http://arxiv.org/abs/https://doi.org/10.1142/S0219749919410193}
  {https://doi.org/10.1142/S0219749919410193} \BibitemShut {NoStop}%
\bibitem [{\citenamefont {Takeoka}\ \emph {et~al.}(2017)\citenamefont
  {Takeoka}, \citenamefont {Seshadreesan}, \citenamefont {You}, \citenamefont
  {Izumi},\ and\ \citenamefont {Dowling}}]{PhysRevA.96.052118}%
  \BibitemOpen
  \bibfield  {author} {\bibinfo {author} {\bibfnamefont {Masahiro}\
  \bibnamefont {Takeoka}}, \bibinfo {author} {\bibfnamefont {Kaushik~P.}\
  \bibnamefont {Seshadreesan}}, \bibinfo {author} {\bibfnamefont {Chenglong}\
  \bibnamefont {You}}, \bibinfo {author} {\bibfnamefont {Shuro}\ \bibnamefont
  {Izumi}}, \ and\ \bibinfo {author} {\bibfnamefont {Jonathan~P.}\ \bibnamefont
  {Dowling}},\ }\bibfield  {title} {\enquote {\bibinfo {title} {Fundamental
  precision limit of a mach-zehnder interferometric sensor when one of the
  inputs is the vacuum},}\ }\href {\doibase 10.1103/PhysRevA.96.052118}
  {\bibfield  {journal} {\bibinfo  {journal} {Phys. Rev. A}\ }\textbf {\bibinfo
  {volume} {96}},\ \bibinfo {pages} {052118} (\bibinfo {year}
  {2017})}\BibitemShut {NoStop}%
\bibitem [{\citenamefont {Ge}\ \emph {et~al.}(2018)\citenamefont {Ge},
  \citenamefont {Jacobs}, \citenamefont {Eldredge}, \citenamefont {Gorshkov},\
  and\ \citenamefont {Foss-Feig}}]{Ge2018}%
  \BibitemOpen
  \bibfield  {author} {\bibinfo {author} {\bibfnamefont {Wenchao}\ \bibnamefont
  {Ge}}, \bibinfo {author} {\bibfnamefont {Kurt}\ \bibnamefont {Jacobs}},
  \bibinfo {author} {\bibfnamefont {Zachary}\ \bibnamefont {Eldredge}},
  \bibinfo {author} {\bibfnamefont {Alexey~V.}\ \bibnamefont {Gorshkov}}, \
  and\ \bibinfo {author} {\bibfnamefont {Michael}\ \bibnamefont {Foss-Feig}},\
  }\bibfield  {title} {\enquote {\bibinfo {title} {Distributed quantum
  metrology with linear networks and separable inputs},}\ }\href {\doibase
  10.1103/PhysRevLett.121.043604} {\bibfield  {journal} {\bibinfo  {journal}
  {Phys. Rev. Lett.}\ }\textbf {\bibinfo {volume} {121}},\ \bibinfo {pages}
  {043604} (\bibinfo {year} {2018})}\BibitemShut {NoStop}%
\bibitem [{\citenamefont {Zhuang}\ \emph {et~al.}(2018)\citenamefont {Zhuang},
  \citenamefont {Zhang},\ and\ \citenamefont {Shapiro}}]{Zhuang2018}%
  \BibitemOpen
  \bibfield  {author} {\bibinfo {author} {\bibfnamefont {Quntao}\ \bibnamefont
  {Zhuang}}, \bibinfo {author} {\bibfnamefont {Zheshen}\ \bibnamefont {Zhang}},
  \ and\ \bibinfo {author} {\bibfnamefont {Jeffrey~H.}\ \bibnamefont
  {Shapiro}},\ }\bibfield  {title} {\enquote {\bibinfo {title} {Distributed
  quantum sensing using continuous-variable multipartite entanglement},}\
  }\href {\doibase 10.1103/PhysRevA.97.032329} {\bibfield  {journal} {\bibinfo
  {journal} {Phys. Rev. A}\ }\textbf {\bibinfo {volume} {97}},\ \bibinfo
  {pages} {032329} (\bibinfo {year} {2018})}\BibitemShut {NoStop}%
\bibitem [{\citenamefont {Sidhu}\ and\ \citenamefont {Kok}(2020)}]{Sidhu2020}%
  \BibitemOpen
  \bibfield  {author} {\bibinfo {author} {\bibfnamefont {Jasminder~S}\
  \bibnamefont {Sidhu}}\ and\ \bibinfo {author} {\bibfnamefont {Pieter}\
  \bibnamefont {Kok}},\ }\bibfield  {title} {\enquote {\bibinfo {title}
  {Geometric perspective on quantum parameter estimation},}\ }\href@noop {}
  {\bibfield  {journal} {\bibinfo  {journal} {AVS Quantum Science}\ }\textbf
  {\bibinfo {volume} {2}},\ \bibinfo {pages} {014701} (\bibinfo {year}
  {2020})}\BibitemShut {NoStop}%
\bibitem [{\citenamefont {Qian}\ \emph {et~al.}(2019)\citenamefont {Qian},
  \citenamefont {Eldredge}, \citenamefont {Ge}, \citenamefont {Pagano},
  \citenamefont {Monroe}, \citenamefont {Porto},\ and\ \citenamefont
  {Gorshkov}}]{Qian2019}%
  \BibitemOpen
  \bibfield  {author} {\bibinfo {author} {\bibfnamefont {Kevin}\ \bibnamefont
  {Qian}}, \bibinfo {author} {\bibfnamefont {Zachary}\ \bibnamefont
  {Eldredge}}, \bibinfo {author} {\bibfnamefont {Wenchao}\ \bibnamefont {Ge}},
  \bibinfo {author} {\bibfnamefont {Guido}\ \bibnamefont {Pagano}}, \bibinfo
  {author} {\bibfnamefont {Christopher}\ \bibnamefont {Monroe}}, \bibinfo
  {author} {\bibfnamefont {J.~V.}\ \bibnamefont {Porto}}, \ and\ \bibinfo
  {author} {\bibfnamefont {Alexey~V.}\ \bibnamefont {Gorshkov}},\ }\bibfield
  {title} {\enquote {\bibinfo {title} {Heisenberg-scaling measurement protocol
  for analytic functions with quantum sensor networks},}\ }\href {\doibase
  10.1103/PhysRevA.100.042304} {\bibfield  {journal} {\bibinfo  {journal}
  {Phys. Rev. A}\ }\textbf {\bibinfo {volume} {100}},\ \bibinfo {pages}
  {042304} (\bibinfo {year} {2019})}\BibitemShut {NoStop}%
\bibitem [{\citenamefont {Xia}\ \emph {et~al.}(2020)\citenamefont {Xia},
  \citenamefont {Li}, \citenamefont {Clark}, \citenamefont {Hart},
  \citenamefont {Zhuang},\ and\ \citenamefont {Zhang}}]{Xia2020}%
  \BibitemOpen
  \bibfield  {author} {\bibinfo {author} {\bibfnamefont {Yi}~\bibnamefont
  {Xia}}, \bibinfo {author} {\bibfnamefont {Wei}\ \bibnamefont {Li}}, \bibinfo
  {author} {\bibfnamefont {William}\ \bibnamefont {Clark}}, \bibinfo {author}
  {\bibfnamefont {Darlene}\ \bibnamefont {Hart}}, \bibinfo {author}
  {\bibfnamefont {Quntao}\ \bibnamefont {Zhuang}}, \ and\ \bibinfo {author}
  {\bibfnamefont {Zheshen}\ \bibnamefont {Zhang}},\ }\bibfield  {title}
  {\enquote {\bibinfo {title} {Demonstration of a reconfigurable entangled
  radio-frequency photonic sensor network},}\ }\href {\doibase
  10.1103/PhysRevLett.124.150502} {\bibfield  {journal} {\bibinfo  {journal}
  {Phys. Rev. Lett.}\ }\textbf {\bibinfo {volume} {124}},\ \bibinfo {pages}
  {150502} (\bibinfo {year} {2020})}\BibitemShut {NoStop}%
\bibitem [{\citenamefont {Guo}\ \emph {et~al.}(2020)\citenamefont {Guo},
  \citenamefont {Breum}, \citenamefont {Borregaard}, \citenamefont {Izumi},
  \citenamefont {Larsen}, \citenamefont {Gehring}, \citenamefont {Christandl},
  \citenamefont {Neergaard-Nielsen},\ and\ \citenamefont {Andersen}}]{Guo2020}%
  \BibitemOpen
  \bibfield  {author} {\bibinfo {author} {\bibfnamefont {Xueshi}\ \bibnamefont
  {Guo}}, \bibinfo {author} {\bibfnamefont {Casper~R}\ \bibnamefont {Breum}},
  \bibinfo {author} {\bibfnamefont {Johannes}\ \bibnamefont {Borregaard}},
  \bibinfo {author} {\bibfnamefont {Shuro}\ \bibnamefont {Izumi}}, \bibinfo
  {author} {\bibfnamefont {Mikkel~V}\ \bibnamefont {Larsen}}, \bibinfo {author}
  {\bibfnamefont {Tobias}\ \bibnamefont {Gehring}}, \bibinfo {author}
  {\bibfnamefont {Matthias}\ \bibnamefont {Christandl}}, \bibinfo {author}
  {\bibfnamefont {Jonas~S}\ \bibnamefont {Neergaard-Nielsen}}, \ and\ \bibinfo
  {author} {\bibfnamefont {Ulrik~L}\ \bibnamefont {Andersen}},\ }\bibfield
  {title} {\enquote {\bibinfo {title} {Distributed quantum sensing in a
  continuous-variable entangled network},}\ }\href@noop {} {\bibfield
  {journal} {\bibinfo  {journal} {Nature Physics}\ }\textbf {\bibinfo {volume}
  {16}},\ \bibinfo {pages} {281--284} (\bibinfo {year} {2020})}\BibitemShut
  {NoStop}%
\bibitem [{\citenamefont {Nair}(2018)}]{Nair2018}%
  \BibitemOpen
  \bibfield  {author} {\bibinfo {author} {\bibfnamefont {Ranjith}\ \bibnamefont
  {Nair}},\ }\bibfield  {title} {\enquote {\bibinfo {title} {Quantum-limited
  loss sensing: Multiparameter estimation and bures distance between loss
  channels},}\ }\href {\doibase 10.1103/PhysRevLett.121.230801} {\bibfield
  {journal} {\bibinfo  {journal} {Phys. Rev. Lett.}\ }\textbf {\bibinfo
  {volume} {121}},\ \bibinfo {pages} {230801} (\bibinfo {year}
  {2018})}\BibitemShut {NoStop}%
\bibitem [{\citenamefont {Gramegna}\ \emph {et~al.}(2020)\citenamefont
  {Gramegna}, \citenamefont {Triggiani}, \citenamefont {Facchi}, \citenamefont
  {Narducci},\ and\ \citenamefont {Tamma}}]{GrTrFaNaTa}%
  \BibitemOpen
  \bibfield  {author} {\bibinfo {author} {\bibfnamefont {Giovanni}\
  \bibnamefont {Gramegna}}, \bibinfo {author} {\bibfnamefont {Danilo}\
  \bibnamefont {Triggiani}}, \bibinfo {author} {\bibfnamefont {Paolo}\
  \bibnamefont {Facchi}}, \bibinfo {author} {\bibfnamefont {Frank~A}\
  \bibnamefont {Narducci}}, \ and\ \bibinfo {author} {\bibfnamefont {Vincenzo}\
  \bibnamefont {Tamma}},\ }\bibfield  {title} {\enquote {\bibinfo {title}
  {Heisenberg scaling precision in multi-mode distributed quantum metrology},}\
  }\href@noop {} {\bibfield  {journal} {\bibinfo  {journal} {arXiv preprint
  arXiv:2003.12550}\ } (\bibinfo {year} {2020})}\BibitemShut {NoStop}%
\bibitem [{\citenamefont {Cram{\'e}r}(1999)}]{cramer1999mathematical}%
  \BibitemOpen
  \bibfield  {author} {\bibinfo {author} {\bibfnamefont {Harald}\ \bibnamefont
  {Cram{\'e}r}},\ }\href@noop {} {\emph {\bibinfo {title} {Mathematical methods
  of statistics}}},\ Vol.~\bibinfo {volume} {9}\ (\bibinfo  {publisher}
  {Princeton university press},\ \bibinfo {year} {1999})\BibitemShut {NoStop}%
\bibitem [{\citenamefont {Rao}(1992)}]{rao1992information}%
  \BibitemOpen
  \bibfield  {author} {\bibinfo {author} {\bibfnamefont {C~Radhakrishna}\
  \bibnamefont {Rao}},\ }\bibfield  {title} {\enquote {\bibinfo {title}
  {Information and the accuracy attainable in the estimation of statistical
  parameters},}\ }in\ \href@noop {} {\emph {\bibinfo {booktitle} {Breakthroughs
  in statistics}}}\ (\bibinfo  {publisher} {Springer},\ \bibinfo {year}
  {1992})\ pp.\ \bibinfo {pages} {235--247}\BibitemShut {NoStop}%
\bibitem [{\citenamefont {Pezz\'e}\ and\ \citenamefont
  {Smerzi}(2008)}]{Pezze2008}%
  \BibitemOpen
  \bibfield  {author} {\bibinfo {author} {\bibfnamefont {Luca}\ \bibnamefont
  {Pezz\'e}}\ and\ \bibinfo {author} {\bibfnamefont {Augusto}\ \bibnamefont
  {Smerzi}},\ }\bibfield  {title} {\enquote {\bibinfo {title} {Mach-zehnder
  interferometry at the heisenberg limit with coherent and squeezed-vacuum
  light},}\ }\href {\doibase 10.1103/PhysRevLett.100.073601} {\bibfield
  {journal} {\bibinfo  {journal} {Phys. Rev. Lett.}\ }\textbf {\bibinfo
  {volume} {100}},\ \bibinfo {pages} {073601} (\bibinfo {year}
  {2008})}\BibitemShut {NoStop}%
\bibitem [{\citenamefont {Olivares}\ and\ \citenamefont
  {Paris}(2009)}]{olivares2009}%
  \BibitemOpen
  \bibfield  {author} {\bibinfo {author} {\bibfnamefont {Stefano}\ \bibnamefont
  {Olivares}}\ and\ \bibinfo {author} {\bibfnamefont {Matteo~GA}\ \bibnamefont
  {Paris}},\ }\bibfield  {title} {\enquote {\bibinfo {title} {Bayesian
  estimation in homodyne interferometry},}\ }\href@noop {} {\bibfield
  {journal} {\bibinfo  {journal} {Journal of Physics B: Atomic, Molecular and
  Optical Physics}\ }\textbf {\bibinfo {volume} {42}},\ \bibinfo {pages}
  {055506} (\bibinfo {year} {2009})}\BibitemShut {NoStop}%
\bibitem [{\citenamefont {Berni}\ \emph {et~al.}(2015)\citenamefont {Berni},
  \citenamefont {Gehring}, \citenamefont {Nielsen}, \citenamefont
  {H{\"a}ndchen}, \citenamefont {Paris},\ and\ \citenamefont
  {Andersen}}]{berni2015}%
  \BibitemOpen
  \bibfield  {author} {\bibinfo {author} {\bibfnamefont {Adriano~A}\
  \bibnamefont {Berni}}, \bibinfo {author} {\bibfnamefont {Tobias}\
  \bibnamefont {Gehring}}, \bibinfo {author} {\bibfnamefont {Bo~M}\
  \bibnamefont {Nielsen}}, \bibinfo {author} {\bibfnamefont {Vitus}\
  \bibnamefont {H{\"a}ndchen}}, \bibinfo {author} {\bibfnamefont {Matteo~GA}\
  \bibnamefont {Paris}}, \ and\ \bibinfo {author} {\bibfnamefont {Ulrik~L}\
  \bibnamefont {Andersen}},\ }\bibfield  {title} {\enquote {\bibinfo {title}
  {Ab initio quantum-enhanced optical phase estimation using real-time feedback
  control},}\ }\href@noop {} {\bibfield  {journal} {\bibinfo  {journal} {Nature
  Photonics}\ }\textbf {\bibinfo {volume} {9}},\ \bibinfo {pages} {577--581}
  (\bibinfo {year} {2015})}\BibitemShut {NoStop}%
\bibitem [{\citenamefont {Scully}\ and\ \citenamefont
  {Zubairy}(1997)}]{Scully1997}%
  \BibitemOpen
  \bibfield  {author} {\bibinfo {author} {\bibfnamefont {Marlan~O.}\
  \bibnamefont {Scully}}\ and\ \bibinfo {author} {\bibfnamefont {M.~Suhail}\
  \bibnamefont {Zubairy}},\ }\href {\doibase 10.1017/CBO9780511813993} {\emph
  {\bibinfo {title} {Quantum Optics}}}\ (\bibinfo  {publisher} {Cambridge
  University Press},\ \bibinfo {year} {1997})\BibitemShut {NoStop}%
\bibitem [{\citenamefont {Schleich}(2005)}]{Schleich}%
  \BibitemOpen
  \bibfield  {author} {\bibinfo {author} {\bibfnamefont {Wolfgang~P.}\
  \bibnamefont {Schleich}},\ }\href {\doibase 10.1002/3527602976.ch1} {\emph
  {\bibinfo {title} {Quantum Optics in Phase Space}}}\ (\bibinfo  {publisher}
  {John Wiley and Sons, Ltd},\ \bibinfo {year} {2005})\ \Eprint
  {http://arxiv.org/abs/https://onlinelibrary.wiley.com/doi/pdf/10.1002}
  {https://onlinelibrary.wiley.com/doi/pdf/10.1002} \BibitemShut {NoStop}%
\bibitem [{\citenamefont {Neumann}(1929)}]{Neumann1929}%
  \BibitemOpen
  \bibfield  {author} {\bibinfo {author} {\bibfnamefont {J.~v.}\ \bibnamefont
  {Neumann}},\ }\bibfield  {title} {\enquote {\bibinfo {title} {Beweis des
  ergodensatzes und desh-theorems in der neuen mechanik},}\ }\href {\doibase
  10.1007/BF01339852} {\bibfield  {journal} {\bibinfo  {journal} {Zeitschrift
  f{\"u}r Physik}\ }\textbf {\bibinfo {volume} {57}},\ \bibinfo {pages}
  {30--70} (\bibinfo {year} {1929})}\BibitemShut {NoStop}%
\bibitem [{\citenamefont {von Neumann}(2010)}]{vonNeumann2010}%
  \BibitemOpen
  \bibfield  {author} {\bibinfo {author} {\bibfnamefont {J.}~\bibnamefont {von
  Neumann}},\ }\bibfield  {title} {\enquote {\bibinfo {title} {Proof of the
  ergodic theorem and the h-theorem in quantum mechanics},}\ }\href {\doibase
  10.1140/epjh/e2010-00008-5} {\bibfield  {journal} {\bibinfo  {journal} {The
  European Physical Journal H}\ }\textbf {\bibinfo {volume} {35}},\ \bibinfo
  {pages} {201--237} (\bibinfo {year} {2010})}\BibitemShut {NoStop}%
\bibitem [{\citenamefont {Hiai}\ and\ \citenamefont
  {Petz}(2000)}]{hiai2000semicircle}%
  \BibitemOpen
  \bibfield  {author} {\bibinfo {author} {\bibfnamefont {Fumio}\ \bibnamefont
  {Hiai}}\ and\ \bibinfo {author} {\bibfnamefont {D{\'e}nes}\ \bibnamefont
  {Petz}},\ }\href@noop {} {\emph {\bibinfo {title} {The semicircle law, free
  random variables and entropy}}},\ \bibinfo {number} {77}\ (\bibinfo
  {publisher} {American Mathematical Soc.},\ \bibinfo {year}
  {2000})\BibitemShut {NoStop}%
\bibitem [{\citenamefont {Pucha{\l}a}\ and\ \citenamefont
  {Miszczak}(2017)}]{puchala2017symbolic}%
  \BibitemOpen
  \bibfield  {author} {\bibinfo {author} {\bibfnamefont {Zbigniew}\
  \bibnamefont {Pucha{\l}a}}\ and\ \bibinfo {author} {\bibfnamefont
  {Jaroslaw~Adam}\ \bibnamefont {Miszczak}},\ }\bibfield  {title} {\enquote
  {\bibinfo {title} {Symbolic integration with respect to the haar measure on
  the unitary groups},}\ }\href@noop {} {\bibfield  {journal} {\bibinfo
  {journal} {Bulletin of the Polish Academy of Sciences Technical Sciences}\
  }\textbf {\bibinfo {volume} {65}},\ \bibinfo {pages} {21--27} (\bibinfo
  {year} {2017})}\BibitemShut {NoStop}%
\bibitem [{\citenamefont {Facchi}\ and\ \citenamefont
  {Garnero}(2017)}]{facchi2017quantum}%
  \BibitemOpen
  \bibfield  {author} {\bibinfo {author} {\bibfnamefont {Paolo}\ \bibnamefont
  {Facchi}}\ and\ \bibinfo {author} {\bibfnamefont {Giancarlo}\ \bibnamefont
  {Garnero}},\ }\bibfield  {title} {\enquote {\bibinfo {title} {Quantum
  thermodynamics and canonical typicality},}\ }\href@noop {} {\bibfield
  {journal} {\bibinfo  {journal} {International Journal of Geometric Methods in
  Modern Physics}\ }\textbf {\bibinfo {volume} {14}},\ \bibinfo {pages}
  {1740001} (\bibinfo {year} {2017})}\BibitemShut {NoStop}%
\bibitem [{\citenamefont {Popescu}\ \emph {et~al.}(2006)\citenamefont
  {Popescu}, \citenamefont {Short},\ and\ \citenamefont
  {Winter}}]{popescu2006entanglement}%
  \BibitemOpen
  \bibfield  {author} {\bibinfo {author} {\bibfnamefont {Sandu}\ \bibnamefont
  {Popescu}}, \bibinfo {author} {\bibfnamefont {Anthony~J}\ \bibnamefont
  {Short}}, \ and\ \bibinfo {author} {\bibfnamefont {Andreas}\ \bibnamefont
  {Winter}},\ }\bibfield  {title} {\enquote {\bibinfo {title} {Entanglement and
  the foundations of statistical mechanics},}\ }\href@noop {} {\bibfield
  {journal} {\bibinfo  {journal} {Nature Physics}\ }\textbf {\bibinfo {volume}
  {2}},\ \bibinfo {pages} {754} (\bibinfo {year} {2006})}\BibitemShut {NoStop}%
\end{thebibliography}%

\end{document}